\newtheorem{theorem}{Theorem}
\newtheorem{lemma}{Lemma}
\newtheorem{corollary}{Corollary}
\newtheorem{proposition}{Proposition}
\newtheorem{remark}{Remark}
\newtheorem{definition}{Definition}
\def\bra#1{\mathinner{\langle{#1}|}}
\def\ket#1{\mathinner{|{#1}\rangle}}
\def\braket#1{\mathinner{\langle{#1}\rangle}}
\renewcommand{\labelenumi}{(\theenumi)}
\renewcommand{\labelenumi}{(\theenumi)}
\title{{\Large {\bf Trapping and spreading properties of quantum walk in homological structure  
}
}}
\author{ 
{\small 
Takuya Machida,$^{1}$ 
\footnote{machida@stat.t.u-tokyo.ac.jp 
}\quad 
Etsuo Segawa,$^{2}$ 
\footnote{e-segawa@m.tohoku.ac.jp 
}\quad
}\\ 
{\scriptsize $^{1}$ 
Japan Society for the Promotion of Science, Japan}\\ 
{\scriptsize $^{1}$ 
Department of Mathematics, University of California, Berkeley
}\\
{\scriptsize 
California, Berkeley, 94720, USA
} \\
{\scriptsize $^3$ 
Graduate School of Information Sciences, Tohoku University, 
}\\
{\scriptsize 
Aoba, Sendai 980-8579, Japan
} \\
} 
\date{\empty }
\begin{document}
\maketitle

\par\noindent
\begin{small}
\par\noindent
{\bf Abstract}. 
We attempt to extract a homological structure of two kinds of graphs by the Grover walk. 
The first one consists of a cycle and two semi-infinite lines 
and the second one is assembled by a periodic embedding of the cycles in $\mathbb{Z}$. 
We show that both of them have essentially the same eigenvalues induced by the existence of cycles in the infinite graphs. 
The eigenspace of the homological structure appears as so called {\it localization} in the Grover walks, 
in that the walk is partially trapped by the homological structure. 
On the other hand, the difference of the absolutely continuous part of spectrum between them provides different behaviors. 
We characterize the behaviors by the density functions in the weak convergence theorem:  the first one is the delta measure at the bottom while 
the second one is expressed by two kinds of continuous functions which have different finite supports $(-1/\sqrt{10},1/\sqrt{10})$ and $(-2/7,2/7)$, respectively. 

\footnote[0]{
{\it Key words and phrases.} 
Quantum walk, Homological structure
}

\end{small}

\setcounter{equation}{0}

\section{Introduction}
The Grover walk arises from application of the quantum search algorithm ~\cite{Grover} to some spatial structures~\cite{Watrous} 
and accomplishes the quadratically speed up e.g.~\cite{Ambainis,Kempe,Shenvi}, compared to so called classical search algorithm. 
The Grover walks consist of an inherited eigenspace from the system of the simple random walk and a specific eigenspace of quantum walk~\cite{Sze}. 
A part of the effect of the inherited part on infinite graphs has been getting revealed gradually, for example linear spreading~\cite{Konno2002,Konno2005}, 
while the efficiency of the quantum search algorithm based on the Grover walk on finite graphs can be estimated by the hitting time 
of the simple random walk~\cite{Sze}. 
However the effect of the specific eigenspace of the quantum walk on the behavior has not been well investigated yet. 
Recently, it was shown that the specific eigenspace is deeply related to an underlying homological structure of the graph~\cite{HKSS_2014}. 
We expect that this eigenspace of the Grover walk plays an important role to recognize some spatial structures for an image processing engineering~\cite{VA1,VA2} in the future. 

Now let us consider the following model to extract its picture motivated by scattering theories~\cite{Exner,FH,Gnutzmann,SchanzSmilansky,Tanner}: 
for a given finite graph $G(V,E)$, we choose two vertices from $V(G)$, and attach two semi-infinite lines to 
the selected two vertices. We denote such a graph $\widetilde{G}$. 
Since our interest is a characterization of a homological structure of $G$ by the Grover walk, 
for simplicity we examine the $4$-length cycle $C_4$ as $G$ throughout this paper. 
Section 2 provides more precise definition. 
By the way, it is known that the Grover walk corresponds to the potential free Schr{\"o}dinger equation~\cite{Gnutzmann,HKSS_2013,Tanner}. 
As is the underlying Schr{\"o}dinger equation in the scattering situation, the Grover walk on the line gives a perfect transmissive behavior. 
Once a walker gets away from the original graph region, a particle gets farther away to infinity ballistically. 
However due to the existence of the structure $C_4$ on the line, we obtain a non-trivial observation as follows. 
Let $X_t$ be the position of a quantum walker on negative or positive semi-infinite lines at time $t\in \{0,1,2,\dots\}$. 
Let us consider two cases of the initial state:
	\begin{enumerate}
        \renewcommand{\labelenumi}{(\roman{enumi})}
	\item  from one (directed) edge in the negative semi-infinite line.
	\item from one (directed) edge in the original graph $C_4$.
	\end{enumerate}
We can compute reflection and transmission rates with respect to the region of $C_4$ as follows. 
	\begin{equation}\label{ballistic}
	\lim_{t\to\infty}\mathbb{P}\left(\frac{X_t}{t}\leq x\right)= \int_{-\infty}^x \left\{c_R\delta_{-1}(y)+c_O\delta_0(y)+c_T\delta_1(y)\right\} dy, 
	\end{equation}
where 
\begin{center}
	\begin{tabular}{c|ccc}
                 & $c_R$  & $c_O$ & $c_T$ \\ \hline 
        case (i) & $1/5$  & $0$   & $4/5$ \\
        case (ii) & $9/20$ & $1/2$ & $1/20$
        \end{tabular}
\end{center}
In case (ii), we observe that a part of particle is trapped in the region of the original graph.  
In this paper, we show that the important difference between the first case (i) and the second case (ii) consists in an 
overlap between their initial states and the ``homological eigenspace" defined as follows:
\begin{definition}\label{def}
For a closed cycle in $\widetilde{C}_4$, $c=((0',u),(u,0),(0,d),(d,0'))$, 
the homological eigenspace $\Gamma\subset \ell^2(A(\widetilde{C}_4))$ treated here is spanned by the following subspaces $\{\Gamma_m\}_{m=1}^4$: 
\begin{equation*}
	\Gamma_m=\mathrm{span}\{w^{(m)}(c)-w^{(m)}(\bar{c})\},\;\; m\in\{1,2,3,4\}, 
\end{equation*}
where $w^{(m)}: P(G)\to \ell^2(A)$ is given by, for a path $p=(e_1,e_2,\dots,e_n)\in P(G)$,  
	\begin{align*}
        w^{(m)}(p) = \sum_{j=1}^n e^{2\pi i m j/n} \delta_{e_j},\;\; m\in\mathbb{N}.
	\end{align*}
\end{definition}
\bigskip
A necessary and sufficient condition for the trap of a quantum walker in the region of $C_4$ is described as follows.
We call this phenomena localization. 
\begin{theorem}\label{ncc}
Localization happens in $\widetilde{C}_4$ if and only if 
the initial state $\Psi_0$ satisfies 
	\begin{equation*}
        \Psi_0\notin \Gamma^{\bot}.
	\end{equation*}
\end{theorem}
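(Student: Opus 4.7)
My plan is to reduce the theorem to the spectral dichotomy for the unitary time-evolution operator $U$ of the Grover walk on $\widetilde{C}_4$. Wiener's theorem for unitaries gives, for any $\phi \in \ell^2(A(\widetilde{C}_4))$,
\[
\lim_{T\to\infty}\frac{1}{T}\sum_{t=0}^{T-1}|\langle \phi, U^t\Psi_0\rangle|^2 \;=\; \sum_{\lambda\in\sigma_p(U)}|\langle \phi, P_\lambda\Psi_0\rangle|^2,
\]
so the time-averaged finding probability at any finite region stays positive exactly when $\Psi_0$ has nontrivial component in the point-spectral subspace $\mathcal{H}_p(U)$. Hence the theorem collapses to the identity $\mathcal{H}_p(U)=\Gamma$.

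The first step is the inclusion $\Gamma\subset \mathcal{H}_p(U)$. For each $m\in\{1,2,3,4\}$ I would check by direct computation that $w^{(m)}(c)-w^{(m)}(\bar{c})$ is an eigenvector of $U$ with eigenvalue $e^{2\pi i m/4}$. At every vertex $v$ of $C_4$ the Grover coin is $\tfrac{2}{\deg v}J-I$ on incoming arcs; the oriented-minus-reverse combination carries zero net flow through $v$, so the averaging term $J$ vanishes and the coin contributes only a sign. Combined with the shift, which cyclically permutes the arcs of $c$ (respectively $\bar{c}$) by one step, the prefactor $e^{2\pi i m/4}$ is precisely what compensates the index shift $j\mapsto j+1$ built into $w^{(m)}$.

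The second step is the reverse inclusion $\mathcal{H}_p(U)\subset \Gamma$. Let $\phi\in\ell^2$ satisfy $U\phi=\lambda\phi$ for some $|\lambda|=1$. On each semi-infinite tail the walk reduces to the free Grover walk on $\mathbb{Z}_{\ge 0}$, whose eigenvalue equation becomes a constant-coefficient second-order recursion on the amplitudes. The characteristic roots lie on the unit circle (or come in reciprocal pairs with modulus $1$), so the only $\ell^2$ solution on each tail is the zero solution. Consequently $\phi$ is supported on the arcs of $C_4$ together with the attachment arcs, and the boundary conditions imposed by the Grover coin at the junction vertices $0,0'$ force $\phi$ to vanish on the attachment arcs as well. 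The remaining eight-dimensional arc space on $C_4$ together with the four vertex equations $U\phi=\lambda\phi$ yields a homogeneous linear system whose solution space is four-dimensional and is spanned exactly by $\{w^{(m)}(c)-w^{(m)}(\bar{c})\}_{m=1}^{4}$, giving $\mathcal{H}_p(U)=\Gamma$.

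The main obstacle I anticipate is the tail-vanishing and junction-condition analysis in the second step: one must verify that no eigenvector can leak nontrivially onto the two semi-infinite lines while remaining $\ell^2$, and then cleanly count the dimension of the cycle-supported eigenspace so that it matches $\dim\Gamma=4$ with no surplus. Once $\mathcal{H}_p(U)=\Gamma$ is in hand, the claimed equivalence "localization $\Leftrightarrow \Psi_0\notin\Gamma^{\perp}$" follows directly from the Wiener formula above, by choosing $\phi$ to be an indicator on a finite region containing the support of the eigenvectors in $\Gamma$.
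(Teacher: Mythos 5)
Your route is genuinely different from the paper's. The paper proves Theorem~\ref{ncc} by brute force: Lemma~\ref{machida's lem} computes $\mu_n(v)$ exactly for every time $n$ and every vertex, for a general initial state $\Psi_0=\sum_j a_j$ supported on the ten arcs around $C_4$, and then identifies $\sum_v\lim_n\mu_n(v)$ with $\sum_{m}\|\Pi_{\Gamma_m}\Psi_0\|^2=\tfrac12\|\psi_0\|^2$ where $\psi_0={}^T[a_2-a_1,a_4-a_5,a_8-a_7,a_6-a_3]$; localization is then read off as equivalent to $\psi_0\neq 0$, i.e.\ $\Psi_0\notin\Gamma^\perp$. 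You instead reduce the statement to the spectral identity $\mathcal{H}_p(U)=\Gamma$ and prove that identity by the tail-recursion argument (constant modulus along each half-line forces $\ell^2$ eigenvectors to vanish there) plus a dimension count of the cycle-supported solutions. Your second step is correct: the zero-outflow conditions at $0$ and $0'$ reduce the eigenvalue equation on the eight cycle arcs to two cyclic recursions coupled by one linear constraint, giving exactly one eigenvector per fourth root of unity, hence $\dim\mathcal{H}_p(U)=4=\dim\Gamma$. What your approach buys is generality and conceptual clarity (it would survive replacing $C_4$ by another cycle or a more complicated core graph); what the paper's approach buys is the exact limiting values $\lim_n\mu_{4n+j}(v)$, which are needed anyway for the quantitative table of $c_R,c_O,c_T$ in the introduction.

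There is, however, one genuine gap in your reduction. Wiener's theorem controls the Ces\`aro average $\frac{1}{T}\sum_{t<T}|\langle\phi,U^t\Psi_0\rangle|^2$, whereas the paper defines localization by $\limsup_{t}\rho_t(j)>0$. The implication ``positive Ces\`aro average $\Rightarrow$ positive $\limsup$'' gives you the \emph{if} direction, but for the \emph{only if} direction you need: $\Psi_0\in\Gamma^\perp=\mathcal{H}_p(U)^\perp$ implies $\limsup_t\mu_t(v)=0$, and a vanishing Ces\`aro average does not imply a vanishing $\limsup$. To close this you must show that the continuous spectrum of $U$ on $\widetilde{C}_4$ is purely absolutely continuous (which holds here, since the continuous part is carried by the free propagation on the two half-lines), so that $\langle\phi,U^t\Psi_0\rangle\to 0$ by the Riemann--Lebesgue lemma, or else fall back on an explicit decay estimate of the kind the paper obtains in Eqs.~(\ref{neg})--(\ref{ud}). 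As written, your proof establishes the theorem only with localization interpreted in time-averaged sense.
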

\bigskip
Instead of the attachment of the two semi-infinite lines, we take a periodic attachment of $C_4$. More precisely,  
we prepare infinite number of copies of $C_4$ labeled by 
\[\{\cdots,C_4^{(-2)},C_4^{(-1)},C_4^{(0)},C_4^{(1)},C_4^{(2)},\cdots\}. \]
We connect $V(C^{(j)})$ and $V(C^{(j+1)})$ by one edge (called bridge) for all $j\in \mathbb{Z}$. 
Section 2 provides more precise definition. 
\begin{theorem}
\label{TM:th:convergence}
In the above setting of the spatial structure with initial state $\Psi_0$, the following statements hold.
\begin{enumerate}
\item Localization happens if and only if 
the initial state $\Psi_0$ satisfies 
	\begin{equation*}
        \Psi_0\notin \left(\bigoplus_{j=-\infty}^{\infty}(j,\Gamma)\right)^{\bot}.
	\end{equation*}
\item Let $X_t$ be the label number of $C_4$'s copy of a quantum walk at time $t$. 
Then we have for any $x\in \mathbb{R}$, 
	\begin{align*}
 	\lim_{t\to\infty}\mathbb{P}\left(\frac{X_t}{t}\leq x\right)
        =\int_{-\infty}^x \left\{ \sum_{j=-\infty}^{\infty}||\Pi_{(j,\Gamma)}\Psi_0||^2\delta_0(y)+f(y)\right\} dy,
	\end{align*}
where 
$f$ is a linear combination of two continuous functions $f_1$ and $f_2$ which have the 
supports $(-1/\sqrt{10},\,1/\sqrt{10})$ and $(-2/7,\;2/7)$, respectively. 
Here $\Pi_\mathcal{H'}$ is the orthogonal projection onto the subspace $\mathcal{H'}$. 
\end{enumerate}
\end{theorem}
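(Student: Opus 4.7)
The plan is to exploit the translational symmetry of the periodic graph along the $\mathbb{Z}$-direction indexing the copies $C_4^{(j)}$ by performing a Floquet--Bloch (Fourier) transform. Taking one unit cell to consist of a copy of $C_4$ together with one bridge edge, so that the cell contains $8+2=10$ arcs, the evolution operator $U$ decomposes as a direct integral
\[
U \;=\; \int^{\oplus}_{[0,2\pi)} \widehat{U}(k)\,\frac{dk}{2\pi},
\]
where each fibre $\widehat{U}(k)$ is a $10\times 10$ unitary matrix whose entries are Grover-coin matrix elements, multiplied by the phases $e^{\pm ik}$ whenever an arc crosses a bridge. All subsequent analysis is reduced to this finite-dimensional family.

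\medskip

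\noindent\textbf{Part (1).} I would compute $\det(\lambda I - \widehat{U}(k))$ and separate it into a $k$-independent factor (the flat bands) and a genuinely $k$-dependent factor. The flat bands correspond to compactly supported eigenfunctions of $U$: an eigenvector of $\widehat{U}(k)$ whose eigenvalue does not depend on $k$ lifts, via the inverse Fourier transform, to a vector in $\bigoplus_{j\in\mathbb{Z}}(j,\Gamma)$, and conversely every element of $(j,\Gamma)$ is an eigenvector of $U$ (directly inherited from the construction of $\Gamma$ carried out for $\widetilde{C}_4$ and used in Theorem~\ref{ncc}). Hence the pure point subspace of $U$ equals $\bigoplus_{j\in\mathbb{Z}}(j,\Gamma)$, and the usual RAGE-type argument yields that the probability of finding the walker in any fixed finite window of cycles has a nonzero $\liminf$ as $t\to\infty$ if and only if $\Psi_0$ has a nonzero projection onto this subspace.

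\medskip

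\noindent\textbf{Part (2).} On the orthogonal complement of the flat bands, the remaining factor of the characteristic polynomial of $\widehat{U}(k)$ yields (after using the cycle symmetry and the fact that two bands are mapped to each other by the chiral symmetry $\lambda\mapsto\overline{\lambda}$) two dispersive branches $\lambda_s(k)=e^{i\theta_s(k)}$, $s=1,2$. Denoting by $\varphi_s(k)$ the associated Bloch eigenvectors and by $\widehat{\Psi}_0(k)$ the Fourier transform of the initial state, the Fourier-analytic method used in the weak convergence theorems of \cite{Konno2002,Konno2005} gives
\[
\lim_{t\to\infty}\mathbb{E}\!\left[e^{i\xi X_t/t}\right]
\;=\;\sum_{j\in\mathbb{Z}}\|\Pi_{(j,\Gamma)}\Psi_0\|^2 \;+\; \sum_{s=1}^{2}\int_{0}^{2\pi} e^{i\xi\, v_s(k)}\,|\langle\varphi_s(k),\widehat{\Psi}_0(k)\rangle|^2\,\frac{dk}{2\pi},
\]
where $v_s(k)=\theta_s'(k)$ is the group velocity of band $s$. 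The first term produces the atom $\delta_0(y)$ with the prescribed mass; changing variables $y=v_s(k)$ in each of the two integrals produces continuous densities $f_1,f_2$ supported on the ranges of $v_1,v_2$, and the density $f$ is the sum of those two pieces (its relative weights are determined by the Bloch-projection factors).

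\medskip

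\noindent\textbf{Main obstacle.} The technical heart of the argument is the explicit diagonalisation of the dispersive block of $\widehat{U}(k)$. I expect to use the reflection symmetry of $C_4$ (together with the fact that the bridge singles out a pair of opposite vertices) to block-diagonalise $\widehat{U}(k)$ into a flat part and a small dispersive part whose secular equation can be solved in closed form; alternatively, a transfer-matrix along the bridge arc gives the dispersion relation directly. Once $\theta_{\pm}(k)$ are obtained, the statement about the supports reduces to the elementary extremisation $\max_k|\theta_\pm'(k)|=1/\sqrt{10}$ and $2/7$, which fixes the intervals $(-1/\sqrt{10},1/\sqrt{10})$ and $(-2/7,2/7)$.
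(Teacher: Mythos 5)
Your overall strategy coincides with the paper's: a Floquet--Bloch reduction to a $10\times 10$ fibre $\widehat{U}(k)$, identification of the flat bands with $\bigoplus_{j}(j,\Gamma)$ for part (1), and the Grimmett--Janson--Scudo moment method followed by a change of variables to the group velocity for part (2). Where you propose a symmetry-based block-diagonalisation or a transfer matrix to handle the "main obstacle", the paper instead invokes the spectral mapping theorem for twisted Szegedy walks: $\mathrm{spec}(\widehat{U}(k))=\varphi_{QW}^{-1}(\mathrm{spec}(\widehat{P}(k)))\cup\{1\}\cup\{-1\}$ with $\varphi_{QW}(x)=(x+x^{-1})/2$ and $\widehat{P}(k)$ the $4\times 4$ twisted random walk on the fundamental domain, whose characteristic polynomial $\frac{\lambda}{9}(9\lambda^3-7\lambda-2\cos k)$ makes the secular equation tractable. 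That reduction (or an equivalent) is not optional decoration; it is how the numbers $1/\sqrt{10}$ and $2/7$ actually get computed.

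There is, moreover, a genuine gap in your band count for part (2), and it sits exactly where the content of the theorem lies. The fibre is $10$-dimensional and the flat part is $4$-dimensional ($\pm 1$ from the quantum-walk-specific spectrum, $\pm i$ lifted from the constant eigenvalue $\lambda=0$ of $\widehat{P}(k)$), so there are six dispersive bands, namely three conjugate pairs $\nu_{j,l}(k)=\lambda_j(k)+(-1)^l i\sqrt{1-\lambda_j(k)^2}$ coming from the three real roots $\lambda_0(k),\lambda_1(k),\lambda_2(k)$ of the cubic $9\lambda^3-7\lambda-2\cos k=0$. Your claim of "two dispersive branches" is therefore wrong (even after quotienting by $\lambda\mapsto\bar{\lambda}$ you have three, not two), and with three distinct branches your change of variables a priori yields three continuous densities. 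The theorem asserts only two supports because of an additional symmetry that must be proved: the group velocities satisfy $x_{0,l}(k+\pi)=x_{1,l}(k)$, so the $j=0$ and $j=1$ branches trace the same curve and merge into a single density supported on $(-1/\sqrt{10},1/\sqrt{10})$, while the $j=2$ branch, with $x_{2,l}(\pi/2+k)=x_{2,l}(\pi/2-k)$, yields the density supported on $(-2/7,2/7)$. Without this identification your argument does not establish that $f$ decomposes into exactly two pieces with those supports. Separately, the substitution $y=v_s(k)$ requires splitting $[-\pi,\pi)$ at the critical points where $v_s'(k)=0$ (the densities diverge at the band edges); the paper handles this by a parametric representation of the density on each monotonicity interval rather than by a global inverse $v_s^{-1}$, and your write-up should do likewise.
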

In the above spatial change, we observe that the periodic homological eigenspaces provides localization. 
On the other hand, the behavior of the remaining part is changed 
from the ballistic spreading in Eq.~(\ref{ballistic}) to a linear spreading whose density function has continuous supports; that is, 
the speed to infinite positions decreases. 
We define the ballistic spreading and linear spreading explicitly in section 2. 
We discuss these mechanisms in the rest of this paper. 

This paper is organized as follows. In section 2, we provide the definition of our models. The homological eigenspace of the Grover walk and the proof of Theorem~\ref{ncc} are presented in section 3. 
Section 4 is spent for the proof of Theorem~\ref{TM:th:convergence}. 
We propose a parametric expression to describe the density function of the limit distribution. 
Finally, we give a discussion in section 5.  
%
\section{Definition}
In this paper, we treat two kind of infinite graphs. Both of them are constructed from $C_4$. 
Let the vertices and edges of $C_4$ be labeled by: 
	\[ V=\{0,u,0',d\} \;\mathrm{and}\; E=\{\{0,u\},\{u,0'\},\{0',d\},\{d,0\}\}. \]
Take two infinite half lines  $H_\pm(V_\pm,E_\pm)$, $V_+=\{1,2,3,\dots\}$ and $V_-=\{-1,-2,-3,\dots\}$ with $E_+=\{\{1,2\},\{2,3\},\dots\}$ and $E_-=\{\{-1,-2\},\{-2,-3\},\dots\}$, respectively. 
The first graph $\widetilde{C}_4$ is defined as follows (see Fig.~1). 
We connect the two semi-infinite lines $H_+$ and $H_-$ to $C_4$ by edges: 
	\begin{align*}
	V(\widetilde{C}_4) &=V(C_4)\cup V_+ \cup V_-, \\
	E(\widetilde{C}_4) &= E(C_4)\cup E_+\cup E_-\cup \{\{0',-1\},\{0,1\}\}. 
	\end{align*}
The second graph $C_4'$ is defined as follows (see Fig.~2). 
We take infinite number of copies of $C_4$ labeled by $\{C_4^{(j)}\}_{j\in \mathbb{Z}}$. 
The vertices and edges of $C_4^{(j)}$ are labeled by $\{0_j,u_j,0_j',d_j\}$ and $\{\{0_j,u_j\},\{u_j,0_j'\},\{0_j',d_j\},\{d_j,0_j\}\}$, respectively. 
We connect $C_4^{(j)}$ and $C_4^{(j+1)}$ by one edge for all $j\in \mathbb{Z}$: 
	\begin{align*}
	V(C_4') &=\bigcup_{j\in \mathbb{Z}}V(C_4^{(j)}), \\
	E(C_4') &=\bigcup_{j\in \mathbb{Z}} \left(E(C_4^{(j)}) \cup \{0_j,0_{j+1}'\} \right) . 
	\end{align*}
For a connected undirected graph $G$, let $A(G)$ be the set of arcs induced by edge of $G$ such that $A(G)=\{(u,v)\in V(G)\times V(G): \{u,v\}\in E(G)\}$. 
The arc $e=(u,v)$ is regarded as directed edge from $u$ to $v$; that is, $o(e)=u$ and $t(e)=v$. 
The total state of the quantum walk on $G\in \{\widetilde{C}_4,C_4'\}$ is described by a Hilbert space
	\begin{equation*}
	\mathcal{H}_G=\ell^2(A(G)). 
	\end{equation*}
For $\psi,\phi\in \ell^2(A(G)$, 
we use the notation $\langle \psi, \phi \rangle=\sum_{e\in A(G)} \bar{\psi}(e)\phi(e)$ as the inner product, and $|| \psi ||^2=\langle \psi, \psi \rangle$ as the norm. 
We employ the standard basis $\{\delta_e : e\in A(G)\}$ for $\mathcal{H}_G$, where for $f\in A(G)$, $\delta_e(f)=1$ $(e=f)$, $=0$ (otherwise). 
For a subspace $\mathcal{H}'\subset \mathcal{H}_G$, we define $\Pi_{\mathcal{H}'}$ as the orthogonal projection onto $\mathcal{H}'$. 
The time evolution is determined by a unitary operator $U_G: \ell^2(G)\to \ell^2(G)$ defined as follows: 
	\begin{equation*}
	\langle \delta_f,  U_G \delta_e\rangle = \left(\frac{2}{\deg{o(e)}}-\delta_{e,\bar{f}}\right) \boldsymbol{1}_{\{o(e)=t(f)\}}(e,f), 
	\end{equation*}
where $\deg{u}$ is the degree of the vertex $u$. 
For given initial state $\Psi_0\in \ell^2(A(G))$ with $||\Psi_0||^2=1$, we consider the iteration of the unitary operation; $\Psi_0 \mapsto \Psi_1 \mapsto \Psi_2 \mapsto \cdots$, where 
$\Psi_t=U_G\Psi_{t-1}$. 
The unitarity of the operator $U_G$ preserves the norm. So we can define the distribution at each time  $\mu_t: V(A(G))\to [0,1]$ such that 
	\begin{equation*}
	\mu_t(u)=\sum_{e:o(e)=u}|\Psi_t(e)|^2.
	\end{equation*}
This is regarded as the finding probability of a quantum walker at the vertex $u\in V(G)$ at time $t$ in this paper. 
We focus on the following limit behaviors proposed by this paper. 
\begin{definition}
Let $\{\rho_t\}_{t=0}^{\infty}$ be a sequence of distributions on $\mathbb{Z}$ satisfying $\rho_t(j)=0$ for any $j$ such that $|j|>ct$ $(t=1,2,\dots)$. 
Here the value $c$ is a positive constant. 
\begin{enumerate}
\renewcommand{\labelenumi}{(\Roman{enumi})}
\item 
If there exists a finite integer $j$ such that 
	\[ \limsup_{t\to\infty} \rho_t(j)>0,  \]
then we say {\it localization} occurs in the sequence of $\{\rho_t\}_{t=0}^{\infty}$. \\
In particular, if it holds that 
	\[  \left|\left\{j\in \mathbb{Z}: \limsup_{t\to\infty} \rho_t(j)>0\right\}\right|\in (0,\infty),  \]
then we say {\it strong localization} occurs. 
\item  
Assume that there exists a right-continuous function $F$ on $\mathbb{R}$ such that for $x\in \mathbb{R}$,
	\[ \lim_{t\to\infty} \sum_{j<tx}\rho_t(j)=F(x). \]
\begin{enumerate}
\item 
If there exists $\alpha,\beta\in [-c,c]$ ($\alpha<\beta$) such that $F\in C^1$ and $dF(x)/dx>0$ on the interval $(\alpha,\beta)$, 
then we say {\it linear spreading (with continuous support)} occurs. 
\item 
If $F$ has discontinuities on $[-c,c]$ except the origin, 
then we say {\it ballistic spreading} occurs.  
\end{enumerate}
\end{enumerate}
\end{definition}
Throughout this paper, a random variable $X_t$ at time $t$ follows 
	\[ \mathbb{P}(X_t=j)=\sum_{v\in V_j}\mu_t(v)\;\;(j\in \mathbb{Z}), \]
where for $\widetilde{C}_4$ case, 
	 \[V_j=\begin{cases} \{j\} & \text{: $j\neq 0$,} \\ V(C_4) & \text{: $j=0$}, \end{cases} \]
and for $C_4'$ case, $V_j=V(C_4^{(j)})$. 
\section{Homological eigenspace}
In this section, we prove the necessary sufficient condition of the localization of the Grover walk on $\widetilde{C}_4$ in Theorem \ref{ncc}. 
Figure 1 is useful to find out the following statement. 
We should remark that on the vertices whose degree are two, a quantum walker takes a trivial motion; if a particle came from right (resp. left) direction, 
then in the next step it goes to right (resp. left) without turn to the opposite direction. 
So it is sufficient to consider the following initial state $\Psi_0\in \ell^2(A(\widetilde{C}_4))$; for every $e\in$  $\{$arcs of positive and negative half lines except $(0',-1)$ and $(0,1)$ $\}$, 
\[ \langle \delta_e, \Psi_0 \rangle=0. \]
See Figure~\ref{fig:space2} for the initial state. 
Then we can compute the behavior of the Grover walk explicitly in the following lemma. 
In the computation, we just pay attention to the only two-exception; $0$ and $0'$ where both reflection and transmission happen. 
\begin{figure}[h]
 \begin{center}
  \includegraphics[scale=1.2]{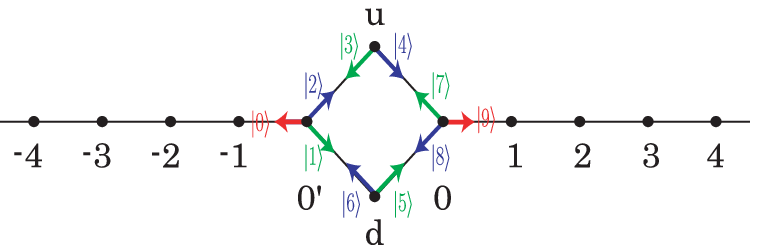}
 \end{center}
 \caption{Space and coin-state for the Grover walk on $\widetilde{C}_4$}
 \label{fig:space2}
\end{figure}
\begin{lemma}\label{machida's lem}
For complex numbers $a_j$ (j=0,1,\dots,9) with $\sum_{j=0}^9|a_j|^2=1$, we take the initial condition to be
\begin{align}
 \ket{\Psi_0}=&\ket{0'}\otimes(a_0\ket{0}+a_1\ket{1}+a_2\ket{2})+\ket{u}\otimes (a_3\ket{3}+a_4\ket{4})\nonumber\\
 &+\ket{d}\otimes (a_5\ket{5}+a_6\ket{6})+\ket{0}\otimes(a_7\ket{7}+a_8\ket{8}+a_9\ket{9}).
\end{align}
Then we have 
\begin{align}
 \mu_n(-1)=&\left\{\begin{array}{ll}
	     \frac{1}{9}|a_0-2a_1-2a_2|^2 & (n=1)\\[1mm]
		    \frac{4}{9}|a_4+a_5|^2 & (n=2)\\[1mm]
		    \frac{4}{81}|a_7+a_8+4a_9|^2 & (n=3)\\[1mm]
		    \frac{4}{81}|a_3+a_6|^2 & (n=4)\\[1mm]
		    \frac{4}{9^{2m+1}}|4a_0+a_1+a_2|^2 & (n=4m+1 ; m=1,2,\ldots)\\[1mm]
		    \frac{4}{9^{2m+1}}|a_4+a_5|^2 & (n=4m+2 ; m=1,2,\ldots)\\[1mm]
		    \frac{4}{9^{2m+2}}|4a_7+a_8+a_9|^2 & (n=4m+3 ; m=1,2,\ldots)\\[1mm]
		    \frac{4}{9^{2m+2}}|a_3+a_6|^2 & (n=4m+4 ; m=1,2,\ldots)
		  \end{array}\right., \label{neg} \\[2mm]
 \mu_n(1)=&\left\{\begin{array}{ll}
	     \frac{1}{9}|2a_7+2a_8-a_9|^2 & (n=1)\\[1mm]
		    \frac{4}{9}|a_3+a_6|^2 & (n=2)\\[1mm]
		    \frac{4}{81}|4a_0+a_1+a_2|^2 & (n=3)\\[1mm]
		    \frac{4}{81}|a_4+a_5|^2 & (n=4)\\[1mm]
		    \frac{4}{9^{2m+1}}|4a_7+a_8+a_9|^2 & (n=4m+1 ; m=1,2,\ldots)\\[1mm]
		    \frac{4}{9^{2m+1}}|a_3+a_6|^2 & (n=4m+2 ; m=1,2,\ldots)\\[1mm]
		    \frac{4}{9^{2m+2}}|4a_0+a_1+a_2|^2 & (n=4m+3 ; m=1,2,\ldots)\\[1mm]
		    \frac{4}{9^{2m+2}}|a_4+a_5|^2 & (n=4m+4 ; m=1,2,\ldots)
		  \end{array}\right., \label{pos}
\end{align}
and
\begin{align}
 \lim_{n\to\infty}\mu_{4n+j}(0')=&\left\{\begin{array}{ll}
			      \frac{1}{2}|a_1-a_2|^2 & (j=1)\\[1mm]
			      \frac{1}{2}|a_4-a_5|^2 & (j=2)\\[1mm]
			      \frac{1}{2}|a_7-a_8|^2 & (j=3)\\[1mm]
			      \frac{1}{2}|a_3-a_6|^2 & (j=4)
		  \end{array}\right.,\label{origin} \\[2mm] 
 \lim_{n\to\infty}\mu_{4n+j}(0)=&\left\{\begin{array}{ll}
			      \frac{1}{2}|a_7-a_8|^2 & (j=1)\\[1mm]
			      \frac{1}{2}|a_3-a_6|^2 & (j=2)\\[1mm]
			      \frac{1}{2}|a_1-a_2|^2 & (j=3)\\[1mm]
			      \frac{1}{2}|a_4-a_5|^2 & (j=4)
		  \end{array}\right., \label{origingin}
\end{align}
\begin{equation} \label{ud}
  \lim_{n\to\infty}\mu_{4n+j}(u)=\lim_{n\to\infty}\mu_{4n+j}(d)=\left\{\begin{array}{ll}
								 \frac{1}{4}(|a_3-a_6|^2+|a_4-a_5|^2) & (j=1,3)\\[2mm]
								 \frac{1}{4}(|a_1-a_2|^2+|a_7-a_8|^2) & (j=2,4)
								       \end{array}\right..
\end{equation}
\end{lemma}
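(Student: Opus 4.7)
My plan is to propagate $\Psi_0$ explicitly under $U_{\widetilde{C}_4}$, exploiting two structural simplifications of the dynamics. First, every vertex of degree $2$ (all half-line vertices together with $u$ and $d$) has Grover coin equal to the $2\times 2$ swap matrix, so such vertices are perfectly transmitting: any incoming amplitude passes through with no reflection. The only non-trivial scattering therefore occurs at the two degree-$3$ vertices $0$ and $0'$, whose Grover coin gives reflection coefficient $-1/3$ on the incoming arc and transmission coefficients $2/3$ on each of the two other outgoing arcs. Second, the cycle $C_4$ has length $4$ with the two scatterers $\{0,0'\}$ placed at distance $2$, so a wave packet circulating on $C_4$ completes one round trip every four time steps and undergoes exactly two Grover scatterings during each round trip; this yields a probability decay of factor $1/81$ per round trip for any amplitude radiating into the half-lines.

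For the transient values $n=1,2,3,4$ in \eqref{neg}--\eqref{pos} I would apply $U_{\widetilde{C}_4}$ four times directly. For $n=1$ only the coin at $0'$ (respectively $0$) acts before the first shift, producing outgoing amplitude $(-a_0+2a_1+2a_2)/3$ (respectively $(2a_7+2a_8-a_9)/3$) on the arc entering the adjacent half-line; squaring gives the $n=1$ entries. The cases $n=2,3,4$ track the initial amplitudes originating at $u$, $d$, $0$, $0'$ as they propagate one step at a time around the cycle, with each additional scattering at $0$ or $0'$ contributing a factor $2/3$ in amplitude. For general $n=4m+j$ with $m\geq 1$ I would induct on $m$ using the $1/81$ decay per round trip, with the base case $m=1$ obtained by two more iterations of $U_{\widetilde{C}_4}$; this accounts for the $9^{-(2m+1)}$ and $9^{-(2m+2)}$ prefactors in \eqref{neg}--\eqref{pos}. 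The specific combinations $4a_0+a_1+a_2$, $4a_7+a_8+a_9$, $a_3+a_6$, $a_4+a_5$ arise as the stable Grover images of the initial amplitudes escaping into the half-lines; they parameterize the ``radiating'' modes of $\Psi_0$.

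For the limits on $V(C_4)$ in \eqref{origin}--\eqref{ud}, I would decompose $\Psi_0=\Pi_\Gamma\Psi_0+\Pi_{\Gamma^{\perp}}\Psi_0$, where $\Gamma=\bigoplus_{m=1}^{4}\Gamma_m$ is the homological eigenspace of Definition~\ref{def}. By the analysis above, $\Pi_{\Gamma^{\perp}}\Psi_0$ escapes into the half-lines with geometric decay and contributes nothing in the limit. The component $\Pi_\Gamma\Psi_0$ lies in an eigenspace on which $U_{\widetilde{C}_4}$ acts by phases of modulus one, so its amplitudes on the arcs of $C_4$ do not decay but only rotate; the $4$-periodic dependence on $j$ in \eqref{origin}--\eqref{ud} records the propagation delay from one cycle vertex to another under this rotation. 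The trapped combinations $a_1-a_2$, $a_4-a_5$, $a_7-a_8$, $a_3-a_6$ are precisely the coefficients of $\Psi_0$ in the basis $\{w^{(m)}(c)-w^{(m)}(\bar{c})\}_{m=1}^{4}$ of $\Gamma$, up to normalization.

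The main obstacle will be the bookkeeping: one must track ten initial amplitudes through many time steps while keeping the radiating and trapped modes decoupled. The technical core is to verify that the radiating combinations close up under $U_{\widetilde{C}_4}$ with the correct geometric factor per round trip, and that the trapped combinations evolve by pure phases reproducing the exact $4$-periodic pattern on $V(C_4)$ indicated on the right-hand sides of \eqref{origin}--\eqref{ud}. Once these invariants are verified, the explicit formulas of the lemma follow by direct substitution of the evolved amplitudes into $\mu_n(\cdot)$.
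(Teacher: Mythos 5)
Your proposal is correct and is essentially the paper's own argument: the authors likewise prove Lemma~\ref{machida's lem} by direct iteration of $U_{\widetilde{C}_4}$, using exactly your two structural facts (degree-$2$ vertices transmit trivially, so only $0$ and $0'$ scatter, with Grover coefficients $-1/3$ and $2/3$), and they identify the surviving combinations $a_1-a_2$, $a_4-a_5$, $a_7-a_8$, $a_3-a_6$ with the homological eigenspace $\Gamma$ just as you do. Your observation that the symmetric (radiating) modes decay by $1/81$ in probability per four-step round trip while the antisymmetric (trapped) modes evolve by pure phases of period $4$ is precisely the mechanism behind the $9^{-(2m+1)}$, $9^{-(2m+2)}$ prefactors and the $4$-periodic limits, so carrying out your bookkeeping reproduces the stated formulas.
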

From Eqs.~(\ref{origin})-(\ref{ud}) in Lemma~\ref{machida's lem}, we observe that 
there are no contributions of $a_0$ and $a_9$, which are the initial amplitudes assigned to the arcs of negative and positive half lines, to the localization at the region of $C_4$. 
The possible region of the graph which exhibits localization is the subgraph $C_4$ since 
from Eqs.~(\ref{neg}) and (\ref{pos}) in Lemma~\ref{machida's lem}, we have 
	\begin{equation*}
	\lim_{n\to\infty}\mu_n(v)=0\;\;\; (v\notin\{0,0',u,d\}). 
	\end{equation*}
So the subspace spanned by the arcs associated with the two half lines has no contribution to the localization at $\widetilde{C}_4$. 
What is the essential subspace to provide the strong localization? 
The answer is the ``homological eigenspace" in Definition~\ref{def}. 
Recall that $c\in P(\widetilde{C}_4)$ is the following closed cycle of $\widetilde{C}_4$. 
	\[ c=((0',u),(u,0),(0,d),(d,0')).\] 
We denote $\bar{c}$ as the inverse closed cycle of $c$. 
The  ``eigenspace'' comes from the following fact. 
\begin{remark}
For $m\in\{0,1,2,3\}$, 
$w^{(m)}(c)-w^{(m)}(\bar{c})\in \ell^2(A(\widetilde{C}_4))$ is the eigenvector of eigenvalue $i^{m}$ for the time evolution of the Grover walk $U_{\widetilde{C}_4}$. 
\end{remark}
\bigskip
Thus the subspace $\Gamma_m$'s are the eigenspaces of the system. 
From Eqs.~(\ref{origin})-(\ref{ud}), we can express the summation of the limit distribution over all the positions 
by using the eigenspaces $\Gamma_m$ $(m\in\{0,1,2,3\})$ as follows: 
	\begin{equation*}
        \sum_{v\in V(\widetilde{C}_4)}\lim_{n\to\infty}\mu_n(v)=\sum_{m=0}^3 ||\Pi_{\Gamma_m}\Psi_0  ||^2.
	\end{equation*}
Indeed, 
	\begin{align*}
        ||\Pi_{\Gamma_m}\Psi_0||^2 &= \frac{1}{8}\left|(a_2-a_1)+(i)^m(a_4-a_5)+(i)^{2m}(a_8-a_7)+(i)^{3m}(a_6-a_3) \right|^2 \\
        		      &= \frac{1}{2} || \Pi_{\{s_m\}}\psi_0 ||_{\mathbb{C}^4}^2, 
        \end{align*}
where $s_m={}^T[1/2,i^m/2,i^{2m}/2,i^{3m}/2]$, and $\psi_0={}^T[a_2-a_1, a_4-a_5, a_8-a_7,a_6-a_3]$. 
Remarking that $\{s_m\}_{m=0}^3$ is a complete orthonormal base of $\mathbb{C}^4$, 
	\begin{equation*}
        \frac{1}{2} \sum_{m=0}^3|| \Pi_{\{s_m\}}\psi_0 ||^2_{\mathbb{C}^4}= \frac{1}{2}||\psi_0||_{\mathbb{C}^4}^2. 
        \end{equation*}
This is equivalent to the summation of $\lim_{n\to\infty}\mu_n(v)$ over $v\in\{0,u,0',d\}$ described by Eqs.~(\ref{origin})-(\ref{ud}). 

We conclude that the overlap between the initial state and the homological eigenspaces is a necessary and sufficient condition for localization, 
which implies the desired conclusion of Theorem \ref{ncc}. 

\section{Spectral analysis and limit behavior for the Grover walk on $C_4'$}
\subsection{Spectral mapping theorem}
Let $\mathcal{D}$ be the fundamental domain of the graph $C_4'$~\cite{Sunada}. 
The fundamental domain $\mathcal{D}$ is represented by $V(\mathcal{D})=V(C_4)$, $E(\mathcal{D})=E(C_4)\cup \{0,0'\}$. 
We assign a one form $\theta$ to each arc in $A(\mathcal{D})$ such that 
	\begin{equation*}
	\theta(f)=
        \begin{cases}
        k & \text{: $f=(0,0')$,}\\
        -k &  \text{: $f=(0',0)$,}\\
        0 & \text{: otherwise.}
        \end{cases}
	\end{equation*}
Since it holds that 
$V(C_4')\cong \mathbb{Z}\times V(\mathcal{D})$, $A(C_4')\cong \mathbb{Z}\times A(\mathcal{D})$\footnote{ 
The one-to-one correspondence between $V(C_4')$ and $\mathbb{Z}\times V(\mathcal{D})$ is denoted by 
$v_j\leftrightarrow (j,v)$,  $(v\in \{0,0',u,d\})$ and 
one between $A(C_4')$ and $\mathbb{Z}\times A(\mathcal{D})$ is $(v_j,w_j)\leftrightarrow (j,(v,w))$ for $(v_j,w_j)\in A(C_4^{(j)})$, 
$(0_j,0'_{j+1})\leftrightarrow (j,(0,0'))$ and $(0_j',0_{j-1})\leftrightarrow (j,(0',0))$. },  
we regard the vertices and the arcs of $C_4'$ as the elements of $\mathbb{Z}\times V(\mathcal{D})$ and $\mathbb{Z}\times A(\mathcal{D})$. 

Now we take the Fourier transform $\mathcal{F}: \ell^2(\mathbb{Z}\times A(\mathcal{D}))\to L^2([-\pi,\pi)\times A(\mathcal{D}))$ defined by 
	\begin{equation*}
  	 \hat\psi(k,f)\equiv\mathcal{F}(\psi)(k,f)=\sum_{x \in \mathbb{Z}} e^{-ikx}\psi(x,f) \quad (k\in [-\pi,\pi)).
	\end{equation*}
The Fourier inversion transform $\mathcal{F}^{-1}: L^2([-k,k)\times A(\mathcal{D})) \to \ell^2(\mathbb{Z}\times A(\mathcal{D}))$ is 
	\begin{equation*}
  	\mathcal{F}^{-1}(\hat{\psi})(x,f)= \int_{-\pi}^\pi e^{ikx}\hat{\psi}(k) \frac{dk}{2\pi} \quad (x\in \mathbb{Z}). 
	\end{equation*}
The time evolution $\Psi_{t}=U_{C_4'}^t \Psi_0$ makes a relationship  
	\begin{equation*}
 	\hat\Psi_{t+1}(k)=\hat{U}(k)\hat\Psi_t(k),
	\end{equation*}
where for $\psi\in L^2([-\pi,\pi)\times A(\mathcal{D}))$, 
	\[ (\hat{U}(k)\psi)(k,f)=  \sum_{e:o(e)=t(f)} e^{-i\theta(f)}\left( \frac{2}{\deg (o(e))}-\delta_{e,\bar{f}} \right) \psi(k,e). \]
For fixed $k$, we regard $\hat{U}(k)$ as a unitary operator on $\ell^2(\mathcal{D})$ whose inner product is 
$\langle \psi,\phi \rangle_{\mathcal{D}}=\sum_{e\in A(\mathcal{D})}\bar{\psi}(e)\phi(e)$. 
Such a quantum walk iterated by $\hat{U}(k)$ is called {\it twisted} Szegedy walk~\cite{HKSS_2014}. 
On the other hand, for fixed $k$, the {\it twisted} random walk $P(k): \ell^2(V(\mathcal{D}))\to \ell^2(V(\mathcal{D}))$ is defined as follows~\cite{Sunada}: 
	\begin{equation}\label{tRW}
	P(k) \cong 
        \begin{bmatrix}
        0 & 1/2 & 1/2 & e^{ik}/3 \\
        1/3 & 0 & 0 & 1/3 \\
        1/3 & 0 & 0 & 1/3 \\
        e^{-ik}/3 & 1/2 & 1/2 & 0
        \end{bmatrix}. 
        \end{equation}
Here we have taken 
$\delta_{0'}\cong {}^T[1,0,0,0]$, $\delta_u\cong {}^T[0,1,0,0]$, $\delta_d\cong {}^T[0,0,1,0]$ and $\delta_{0}\cong {}^T[0,0,0,1]$. 
We assign transition probability $p: A(\mathcal{D})\to [0,1]$ such that 
	\begin{equation*}
        p(e)=
        \begin{cases}
        1/3 & \text{: $o(e)\in \{0,0'\}$,} \\
        1/2 & \text{: $o(e)\in \{u,d\}$.}
        \end{cases}
        \end{equation*}
Moreover we define the reversible probability $\pi: \ell^2(V)\to \ell^2(V)$ such that 
	\begin{equation*}
        \pi(0)=\pi(0')=3/10,\;\; \pi(u)=\pi(d)=1/5.
        \end{equation*}
Let $A,B: \ell^2(V(\mathcal{D}))\to \ell^2(A(\mathcal{D}))$ be 
	\begin{align*} 
        A\delta_v &= \sum_{e:o(e)=v}\sqrt{p(e)}\delta_{e}, \\
        B\delta_v &= \sum_{e:t(e)=v}e^{-i\theta(e)}\sqrt{p(\bar{e})}\delta_{e}.   
        \end{align*}
For $z,w\in \mathbb{C}$, we define $\Phi_{z,w}: \ell^2(V)\to \ell^2(A)$ such that 
	\[ \Psi_{z,w}(f)=zAf+wBf. \]
According to Higuchi et al.~\cite{HKSS_2014}, we have the spectrum of the twisted Szegedy walk as follows. 
\begin{lemma}\label{specmap}
\noindent 
\begin{enumerate}
\item Let $\varphi_{QW}(x)=(x+x^{-1})/2$. Then 
	\begin{equation}\label{spec}
	\mathrm{spec}(\hat{U}(k))=\varphi_{QW}^{-1}(\mathrm{spec}(\hat{P}(k)))\cup \{1\}^{1+\delta_0(k)} \cup \{-1\}^{1+\delta_\pi(k)}.
	\end{equation}
\item 
The eigenvector $w_\alpha$ of the eigenvalue $e^{i\alpha}$ in the first term of RHS in Eq.~(\ref{spec}) are described as follows. 
Let $f_\lambda\in \ell^2(V)$ be the eigenvector satisfying $\hat{P}(k)f_\lambda=\lambda f_\lambda$. 
Then 
	\begin{equation*} 
         w_\alpha= \Phi_{1,-e^{i\alpha}}(D^{-1/2}_\pi f_{\cos \alpha}),
	\end{equation*}
where $(D_\pi f)(v)=\pi(v)f(v)$. 
\item 
The Fourier inversion of the eigenspaces of second and third terms of RHS in Eq.~(\ref{spec}) are 
	\begin{align*}
        \mathcal{F}^{-1}(\mathcal{M}_+) &\cong (0, \Gamma_0), \\
        \mathcal{F}^{-1}(\mathcal{M}_-) &\cong  (0, \Gamma_2). 
        \end{align*}
\end{enumerate}      
\end{lemma}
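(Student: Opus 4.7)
The plan is to read Lemma~\ref{specmap} as a fibrewise specialization in $k$ of the general spectral mapping theorem for twisted Szegedy walks established in~\cite{HKSS_2014}. The main task is therefore to verify the algebraic set-up that allows the abstract machinery to apply, and then extract the three claims in succession.

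First I would record the four identities connecting $A, B, \hat U(k)$ and $\hat P(k)$: (a) the isometry conditions $A^*A=B^*B=I_{\ell^2(V(\mathcal{D}))}$, which reduce to the sum rules $\sum_{e:o(e)=v}p(e)=\sum_{e:t(e)=v}p(\bar e)=1$; (b) the discriminant identity $A^*B=D_\pi^{1/2}\hat P(k) D_\pi^{-1/2}$, obtained by computing $\langle A\delta_w,B\delta_v\rangle=\sum_{e:o(e)=w,\,t(e)=v}\sqrt{p(e)p(\bar e)}\,e^{-i\theta(e)}$ and matching it with~(\ref{tRW}) through the reversible measure $\pi$; (c) the factorization $\hat U(k)=S_k(2AA^*-I)$, where $S_k$ is the $\theta$-twisted arc-reversal acting in the Fourier coordinates, read off directly from the given expression for $\hat U(k)$; and (d) the compatibility $S_k A=B$, which is the algebraic engine of the spectral mapping.

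Granted these identities, part~(1) is the content of the abstract theorem, which gives $\mathrm{spec}(\hat U(k))=\varphi_{QW}^{-1}(\mathrm{spec}(A^*B))\cup\mathcal{M}_+\cup\mathcal{M}_-$ with $\mathcal{M}_\pm\subset\ker(\hat U(k)\mp I)\cap(\mathrm{range}\,A+\mathrm{range}\,B)^{\bot}$; identity~(b) turns the first piece into $\varphi_{QW}^{-1}(\mathrm{spec}(\hat P(k)))$. For part~(2), I would substitute $w_\alpha=\Phi_{1,-e^{i\alpha}}(D_\pi^{-1/2}f_{\cos\alpha})$ into the equation $\hat U(k)w_\alpha=e^{i\alpha}w_\alpha$ and, using~(a)--(d), reduce it to $(\hat P(k)-\cos\alpha)f_{\cos\alpha}=0$, which holds by the choice of $f_{\cos\alpha}$. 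For part~(3), I would identify $(\mathrm{range}\,A+\mathrm{range}\,B)^{\bot}$ with the span of signed cycle indicators on the fundamental domain $\mathcal{D}$; for generic $k$ only the $C_4$-cycle $c$ survives the $\theta$-twist (since the bridge cycle picks up a non-trivial $e^{\pm ik}$ phase and fails to close), contributing the generators $w^{(0)}(c)-w^{(0)}(\bar c)\in\mathcal{M}_+$ and $w^{(2)}(c)-w^{(2)}(\bar c)\in\mathcal{M}_-$, whose inverse Fourier image is exactly $(0,\Gamma_0)$ and $(0,\Gamma_2)$.

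The main obstacle is the multiplicity bookkeeping at the degenerate momenta $k=0,\pi$. The first Betti number of $\mathcal{D}$ is two (the $C_4$-cycle plus the bridge cycle), so a priori both cycles could contribute to the residual part, and one must make sure that the second homological mode which re-emerges at $k=0$ (respectively $k=\pi$) is not already double-counted against an inherited $\pm 1$ eigenvalue. The cleanest way to settle this is a dimension count: the inherited part of $\hat U(k)$ loses exactly one dimension (relative to the generic value $2|V(\mathcal{D})|$) for each occurrence of $\pm 1$ in $\mathrm{spec}(\hat P(k))$, because $\varphi_{QW}^{-1}(\pm 1)=\{\pm 1\}$ is single-valued; inspection of~(\ref{tRW}) shows that this occurs precisely at $k=0$ for the eigenvalue $+1$ and at $k=\pi$ for $-1$, forcing the residual multiplicities $1+\delta_0(k)$ at $+1$ and $1+\delta_\pi(k)$ at $-1$ as claimed.
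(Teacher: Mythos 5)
Your reconstruction is essentially the intended argument: the paper itself offers no proof of Lemma~2, simply invoking the spectral mapping theorem for twisted Szegedy walks from Higuchi et al.~\cite{HKSS_2014}, and your identities (a)--(d), the reduction of $\hat U(k)w_\alpha=e^{i\alpha}w_\alpha$ to the eigenvalue equation for $\hat P(k)$ via $A^*B$, and the identification of $\mathcal{M}_\pm$ with twisted cycle indicators in $(\mathrm{range}\,A+\mathrm{range}\,B)^{\bot}$ are exactly the machinery of that reference. The only point to tighten is that the dimension count alone fixes the total residual dimension at $k=0,\pi$ but not its split between $+1$ and $-1$; that split is settled by your holonomy observation (the bridge cycle closes with phase $e^{ik}$, yielding a $+1$ mode at $k=0$ and a $-1$ mode at $k=\pi$), so the two arguments should be stated as working in tandem rather than the count "forcing" the conclusion.
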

\bigskip
The characteristic polynomial of $\hat{P}(k)$ is obtained as follows. 
	\begin{equation*}
        \mathrm{det}(\lambda - \hat{P}(k))=\frac{\lambda}{9}(9\lambda^3-7\lambda-2\cos k). 
        \end{equation*}
We find a constant eigenvalue $\lambda=0$ and the other three satisfying the following cubic equation. 
	\begin{equation}\label{cubic}
        9\lambda^3-7\lambda-2\cos k=0. 
        \end{equation}
In the following first and second subsections, we discuss the contribution of the constant eigenvalue $\lambda=0$ 
and the eigenvalues satisfying Eq.~(\ref{cubic}) to the behavior of the QW, respectively.  
\subsection{Homological eigenspace of $C_4'$}
In this subsection, we clarify the eigenspace associated with the constant eigenvalues with respect to $k$. 
Recall that the unitary matrix $\hat{U}(k)$ has such three eigenvalues $-1$, $0$ and $1$. 
The eigenvalues $\pm 1$ come from the genesis part of quantum walk (corresponding to second and third terms of RHS in Eq.~(\ref{spec})). 
It has already found out by Lemma~\ref{specmap} that the eigenspaces are $\mathbb{Z}\times \Gamma_0$ and 
$\mathbb{Z}\times \Gamma_2$, respectively. 
Now we consider the eigenvalue $0$ which comes from the inherited from the twisted random walk. 
The eigenvector $f_0\in \ell^2(V)$ is expressed by 
	\[ f_0={}^T[0,1,-1,0]. \]
From Lemma~\ref{specmap}, the corresponding eigenvalues of $\hat{U}(k)$ are $\pm i$, and the eigenvectors are interestingly 
$w_1(c)-w_1(\bar{c})$ and $w_3(c)-w_3(\bar{c})$, respectively. Then we have 
	\[ \mathrm{span}\{\mathcal{F}^{-1}[e^{ixk}(w_m(c)-w_m(\bar{c}))]: x\in \mathbb{Z}\}= \mathbb{Z}\times \Gamma_m \;\;(m\in\{ 1,3\}). \]
We conclude that the homological eigenspace of $\tilde{C}_4$ periodically exists in $C_4'$. 
We summarize these statements obtained in this subsection. 
\begin{proposition}\label{eigenspaceC_4'}
All the eigenvalues of the Grover walk on $C_4'$ are $\{i^m;m\in\{0,1,2,3\}\}$. 
The eigenspace $\Gamma'_m$ is described by 
	\[ \Gamma'_m \cong \mathbb{Z}\times \Gamma_m \;\;\;\;(m\in\{0,1,2,3\}). \]
\end{proposition}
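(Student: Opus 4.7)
The plan is to invoke the spectral mapping theorem (Lemma~\ref{specmap}) fiberwise in $k$ and then pull back to the lattice via Fourier inversion. In this direct-integral picture, only eigenvalues of $\hat{U}(k)$ that are constant on the whole Brillouin zone $[-\pi,\pi)$ can contribute $\ell^{2}$-eigenvectors to $U_{C_{4}'}$; genuinely $k$-dependent band functions produce only absolutely continuous spectrum and hence no point spectrum after Fourier inversion.

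First I would isolate the $k$-independent eigenvalues of $\hat{U}(k)$. The factorization $\det(\lambda-\hat{P}(k))=(\lambda/9)(9\lambda^{3}-7\lambda-2\cos k)$ already displayed in the paper shows that $\lambda=0$ is the unique $k$-independent eigenvalue of $\hat{P}(k)$, while the three cubic roots vary nontrivially with $\cos k$. Applying $\varphi_{QW}^{-1}$ sends $0$ to $\{i,-i\}$, which therefore become $k$-independent eigenvalues of $\hat{U}(k)$. The eigenvalues $\pm 1$ of $\hat{U}(k)$ from the cubic appear only at the exceptional points $k=0$ and $k=\pi$, but part (3) of Lemma~\ref{specmap} already identifies the Fourier inversions of the corresponding subspaces $\mathcal{M}_{\pm}$ as $(0,\Gamma_{0})$ and $(0,\Gamma_{2})$; translating these by $j\in\mathbb{Z}$ yields $\mathbb{Z}\times\Gamma_{0}$ and $\mathbb{Z}\times\Gamma_{2}$ as the $\pm 1$-eigenspaces of $U_{C_{4}'}$.

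Next I would compute the $\pm i$-eigenvectors via Lemma~\ref{specmap}(2). A direct check against the matrix display of $\hat{P}(k)$ in Eq.~(\ref{tRW}) shows that $f_{0}={}^{T}[0,1,-1,0]$ is a $k$-independent eigenvector at $\lambda=0$. With $\cos\alpha=0$, i.e.\ $\alpha=\pm\pi/2$, the formula yields $w_{\pm\pi/2}=\Phi_{1,\mp i}(D_{\pi}^{-1/2}f_{0})$. Since $f_{0}$ vanishes on $\{0,0'\}$ and the twist $\theta$ is supported only on the bridge arcs $(0,0')$, $(0',0)$, both $A D_{\pi}^{-1/2}f_{0}$ and $B D_{\pi}^{-1/2}f_{0}$ are supported on the eight arcs of $c\cup\bar c$, and the coefficients of these eight arcs are precisely the fourth-root-of-unity characters appearing in the definitions of $w^{(1)}$ and $w^{(3)}$. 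Hence $w_{\pi/2}\propto w^{(1)}(c)-w^{(1)}(\bar c)$ and $w_{-\pi/2}\propto w^{(3)}(c)-w^{(3)}(\bar c)$. Because these vectors do not depend on $k$, the Fourier inversions $\{\mathcal{F}^{-1}(e^{ikx}w_{\pm\pi/2})\}_{x\in\mathbb{Z}}$ span exactly $\mathbb{Z}\times\Gamma_{1}$ and $\mathbb{Z}\times\Gamma_{3}$, matching the description in the paragraph preceding the proposition.

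To finish, I would rule out any further eigenvalues by the standard direct-integral argument: each of the three roots of $9\lambda^{3}-7\lambda-2\cos k=0$ is a non-constant real-analytic function of $k$ on $[-\pi,\pi)$, so the corresponding fiber eigenprojections integrate to an absolutely continuous component of the spectrum without point mass. The main obstacle is the explicit coefficient bookkeeping in the preceding paragraph: one must carefully track how the weights $\sqrt{p(e)}$ and the twist phases $e^{-i\theta(e)}$ in the definitions of $A$ and $B$ combine with the entries of $D_{\pi}^{-1/2}f_{0}$, and verify that the resulting eight amplitudes on $A(c)\cup A(\bar c)$ agree with Definition~\ref{def} up to a single overall scalar. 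Once this identification is carried out, Proposition~\ref{eigenspaceC_4'} follows immediately from Lemma~\ref{specmap}.
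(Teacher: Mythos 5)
Your proposal is correct and follows essentially the same route as the paper: apply the spectral mapping theorem of Lemma~\ref{specmap} fiberwise, observe that $\lambda=0$ is the only $k$-independent eigenvalue of $\hat{P}(k)$ so that $\varphi_{QW}^{-1}$ produces the constant eigenvalues $\pm i$ with eigenvectors $w^{(1)}(c)-w^{(1)}(\bar c)$ and $w^{(3)}(c)-w^{(3)}(\bar c)$, combine with the birth eigenspaces $\mathcal{M}_\pm$ giving $\mathbb{Z}\times\Gamma_0$ and $\mathbb{Z}\times\Gamma_2$, and Fourier-invert. Your added remarks --- that the non-constant roots of $9\lambda^3-7\lambda-2\cos k=0$ contribute only absolutely continuous spectrum, and the explicit support/coefficient check for $\Phi_{1,\mp i}(D_\pi^{-1/2}f_0)$ --- merely make explicit what the paper leaves implicit.
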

\begin{remark}
An expression for a complete orthonormal base; $\{\eta_{m;j}\}_{j\in\mathbb{Z}}$, of the eigenspace $\Gamma_m'$ is expressed as follows: 
	\begin{equation*}
        \left\{ \frac{1}{\sqrt{8}}\left(w_m(c_j)-w_m(\overline{c_j})\right) \right\}_{j\in \mathbb{Z}}.
        \end{equation*}
Here $c_j$ is the closed cycle at the unit of $j$ (see Fig.~\ref{fig:space}); 
	\[c_j=\left((j,|2\rangle),(j,|4\rangle),(j,|8\rangle),(j,|6\rangle)\right). \]
\end{remark}
\begin{figure}[h]
 \begin{center}
  \includegraphics[scale=0.6]{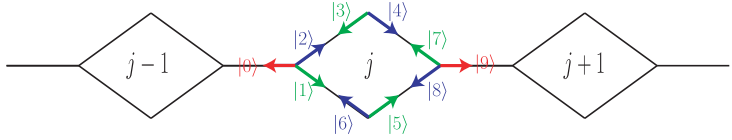}
 \end{center}
 \caption{Space and coin-state for the Grover walk on $C_4'$}
 \label{fig:space}
\end{figure}
As a consequence of Proposition~\ref{eigenspaceC_4'}, we observe that the localization 
happens if and only if the initial state $\Psi_0$ has an overlap to the periodic homological eigenspaces 
$\mathbb{Z}\times \Gamma_m$ $(m\in\{0,1,2,3\})$. 
Indeed, 
	\[ \mu_n(v) \sim \sum_{e:o(e)=v}\left|\sum_{m=0}^3(i^{mn}\Pi_{\Gamma'_m}\Psi_0)(e)\right|^2, \]
and the mass of the delta measure in the weak limit theorem is 
	\begin{equation}\label{Delta} 
        \Delta=\sum_{v\in V(C_4')}\lim_{n\to\infty}\mu_n(v)=\sum_{j\in \mathbb{Z}}\sum_{m=0}^3|\langle \eta_{m;j},\Psi_0\rangle|^2. 
        \end{equation}
Therefore when the support of $\Psi_0$ is finite; that is, $\sum_{e\in A(\mathcal{D})}|\{j\in \mathbb{Z}: \Psi_0(j,e)\neq 0\}|<\infty$, 
then we can observe that a quantum walker is trapped in finite number of subgraphs $C_4^{(j)}$'s at rate $\Delta\in[0,1]$, 
where for a set $A$, $|A|$ is the cardinality of $A$. 
This is nothing but the strong localization. 
\subsection{Weak convergence theorem}
In this section we discuss the weak convergence for the quantum walk.
The solutions for the cubic equation Eq.~(\ref{cubic}) are obtained as follows. 
	\begin{equation}\label{randomwalk}
 	\lambda_j(k)=\frac{2\sqrt{7}}{3\sqrt{3}}\cos\left\{\frac{1}{3}\arccos\left(\frac{9\sqrt{3}}{7\sqrt{7}}\cos k\right) + \frac{2j\pi}{3}\right\}\,\;\;(j=0,1,2). 
	\end{equation}
Thus Lemma~\ref{specmap} provides the inherited six-eigenvalue from the twisted random walk Eq.~(\ref{tRW}) which depends on the wave number $k$: 
	\begin{equation*}
 	\nu_{j,l}(k)=\lambda_j(k)+(-1)^l\, i\sqrt{1-\lambda_j(k)^2}\quad (j=0,1,2,\,l=0,1).
	\end{equation*}
To get the formal representation in Theorem~\ref{TM:th:convergence} and the properties of the function $f(x)$, we compute the $r$-th moment $\mathbb{E}(X_t^r)$ ($r=0,1,2,\ldots$) according to the Fourier analysis which was introduced by Grimmett et al.~\cite{GrimmettJansonScudo2004}.
Let $\ket{v_{j,l}(k)}\,(j=0,1,2,\,l=0,1)$ (resp. $\ket{v_j^{(c)}(k)}\,(j=0,1,2,3)$) be normalized eigenvectors of the matrix $\hat{U}(k)$ corresponding to the eigenvalues $\nu_{j,l}(k)$ (resp. $\nu_j^{(c)}$).
The $r$-th moment on the Fourier space becomes
	\begin{align}
 	\mathbb{E}(X_t^r)=&\sum_{x\in\mathbb{Z}}x^r\mathbb{P}(X_t=x)\nonumber\\
 	=&\int_{-\pi}^\pi \bra{\hat\Psi_t(k)}\left(D^r\ket{\hat\Psi_t(k)}\right)\frac{dk}{2\pi}\nonumber\\
 	=&(t)_r \left\{ 0^r\Delta + \int_{-\pi}^\pi \sum_{j=0}^2\sum_{l=0}^1 \left(\frac{i\nu'_{j,l}(k)}{\nu_{j,l}(k)}\right)^r\left|\braket{v_{j,l}(k)|\hat\Psi_0(k)}\right|^2\frac{dk}{2\pi}\right\} \\ 
        &\qquad\qquad\qquad\qquad\qquad\qquad\qquad\qquad\qquad\qquad\qquad+O(t^{r-1}), \notag
	\end{align}
where $D=i(d/dk)$, $(t)_r=t(t-1)\times\cdots\times(t-r+1)$
and
	\begin{equation*}
 	\Delta=\int_{-\pi}^\pi \sum_{j=0}^3  \left|\braket{v_j^{(c)}(k)|\hat\Psi_0(k)}\right|^2\frac{dk}{2\pi}.
	\end{equation*}
As $t\to\infty$, we get a limit with respect to a rescaled random variable $X_t/t$,
	\begin{multline}\label{eq:r-th_moment}
 	\lim_{t\to\infty}\mathbb{E}\left[\left(\frac{X_t}{t}\right)^r\right]  \\
 	=\int_{-\infty}^\infty x^r \Delta\delta_0(x)\,dx
 	+\int_{-\pi}^\pi \sum_{j=0}^2\sum_{l=0}^1  \left(\frac{i\nu'_{j,l}(k)}{\nu_{j,l}(k)}\right)^r\left|\braket{v_{j,l}(k)|\hat\Psi_0(k)}\right|^2\frac{dk}{2\pi}. 
	\end{multline}
Here, putting $x_{j,l}(k)=i\nu'_{j,l}(k)/\nu_{j,l}(k)\,(j=0,1,2,\,l=0,1)$ for the second term in Eq.~(\ref{eq:r-th_moment}), we get an integral form of the term
	\begin{equation}\label{eq:2nd_term}
 	\int_{-\pi}^\pi \sum_{j=0}^2\sum_{l=0}^1  x_{j,l}(k)^r\left|\braket{v_{j,l}(k)|\hat\Psi_0(k)}\right|^2\frac{1}{\frac{dx_{j,l}(k)}{dk}}\,\frac{dx_{j,l}(k)}{2\pi}.
	\end{equation}
Straightforwardly computing the function $x_{j,l}(k)$, we get
	\begin{align}
 	x_{j,l}(k)=&(-1)^l\frac{\lambda'_j(k)}{\sqrt{1-\lambda_j(k)^2}}\nonumber\\
 	=&-(-1)^l \frac{2\sin k}{7\sqrt{1-A^2\cos^2 k}}\cdot\frac{\sin{\xi_j(k)}}{\sqrt{1-\frac{28}{27}{\cos^2\xi_j(k)}}},\label{eq:x(k)}\\
 	\frac{dx_{j,l}(k)}{dk}=&-(-1)^l\frac{A}{21}F({\cos\xi_j(k)})\sqrt{1-\frac{28}{27}{\cos^2\xi_j(k)}},\label{eq:dx(k)/dk}
	\end{align}
where
	\begin{align}
 	A=&\frac{9\sqrt{3}}{7\sqrt{7}},\\
 	\xi_j(k)=&{\frac{1}{3}\arccos(A\cos k)+\frac{2j\pi}{3}}\quad (j=0,1,2),\\
 	F(x)=&-\frac{2x(28x^2+33)}{9(4x^2-1)^3}. \label{F}
	\end{align}

We have seen that the constant eigenvalues cause localization on the quantum walk and appears as the coefficient $\Delta$ of the delta measure in Eq.~(\ref{eq:r-th_moment}).
On the other hand, the continuous functions $\nu_{j,l}(k)$ build the function $f(x)$.
Although it is hard to get the function $f(x)$ explicitly, 
from now on we discuss an abstractive shape of the function $f(x)$ by using the above computations. 
As a preparation, we provide a following argument. 
Let $h(k)$ and $g(k)$ be periodic and bounded functions with $h(k+2\pi)=h(k)$ and $g(k+2\pi)=g(k)$. 
We consider the interval {$[k_0,2\pi+k_0)$ as $\cup_{j=0}^{s-1} [k_j,k_{j+1})\,\,(k_0<\cdots<k_{s-1}<2\pi+k_0=k_s)$ 
so that the function $h(k)$ is a strictly monic and continuous function on $[k_j,k_{j+1})$ and at the boundaries
\footnote{{More precisely, we impose the following assumptions to $h(k)$. (i) $h(k)=h(k+2\pi)$ for all $k\in \mathbb{R}$, (ii) we permit discontinuity of $h(k)$ only at $\{2\pi n+k_j\}_{j=0}^{s-1}$, $n\in \mathbb{N}$. 
(iii) for any interval, $h(k)$ does not take a constant value.} },  
	\[ \lim_{\delta\downarrow 0} \frac{h(k_{j+1}+\delta)-h_{j+1}^{(R)}} {\delta}
        	=\lim_{\delta\uparrow 0} \frac{h(k_{j+1}+\delta)-h_{j+1}^{(L)}}{\delta}=0, \]
for all $j\in \mathbb{Z}/s\mathbb{Z}$.
Here $h_{j}^{(R)}=h(k_j)$ and $h_j^{(L)}=\lim_{k\uparrow k_{j+1}}h_j(k)$.
Then we have 
	\begin{align}\label{chikan}
        \int_{0}^{2\pi} h^r(k)g(k)dk=&\sum_{{j=0}}^{s-1} \int_{k_j}^{k_{j+1}} h^r(k)g(k)dk \nonumber\\
	 =&\int_{-\infty}^{\infty} x^r \sum_{{j=0}}^{s-1} \rho_j(x) dx, 
        \end{align}
\noindent where $\rho_j(x)$ has a finite support $(m_j,M_j)$ and the orbit of $(x,\rho_j(x))$ is expressed as follows: 
	\begin{equation}\label{parameter}
        \left\{(x,\rho_j(x)): x\in (m_j,M_j)\right\}= \left\{ \left( h(k), \frac{g(k)}{|h'(k)|} \right): k\in(k_j,k_{j+1}) \right\}. 
        \end{equation}
Since the function $h(k)$ is a one to one map on the interval $(k_j,k_{j+1})$, we determine a unique value $k\,\in(k_j,k_{j+1})$ such that $h(k)=x\,\in (m_j,M_j)$.
Equivalently another expression for $\rho_j(x)$ is 
	\begin{equation}\label{direct}
        \rho_j(x)=\frac{g(h^{-1}(x))}{|h'(h^{-1}(x))|} \boldsymbol{1}_{(m_j, M_j)}(x). 
        \end{equation}
Here 
	\[ h'(k)=
        \begin{cases} 
        d h(k)/d k & \text{: $k\notin\{{k_{0}},\dots,k_{{s-1}}\}$,} \\ 
        	0 & \text{: $k\in\{{k_{0}},\dots,k_{{s-1}}\}$,}
        \end{cases}\] 
and $m_j=\mathrm{min}\{h_j^{(R)},h_{j+1}^{(L)}\}$, 
$M_j=\mathrm{Max}\{h_j^{(R)},h_{j+1}^{(L)}\}$. 
We formally call $\{k_j\}_{j=0}^{{s}}$ critical points. 

When we obtain an explicit form of the inverse function {$h^{-1}(x)$ on each domain $(m_j,M_j)$}, 
then Eq.~(\ref{direct}) is one of the useful expressions for the density function. 
On the other hand, we propose that even if the inverse function {$h^{-1}(x)$} cannot be computed explicitly, 
one can apply the parametric expression described by Eq.~(\ref{parameter}) to find out some properties of the 
density function. 

By the way, in our case treated here, $\{x_{j,l}(k)\}_{j,l}$ satisfy the assumptions subjected to $h(k)$ in the above argument. 
All the {\it formally} critical points of $x_{j,l}(k)$ in $[-\pi,\pi)$ are arranged in the following table: 
\begin{center}
	\begin{tabular}{c|c|c|c}
	  & $j=0$ & $j=1$ & $j=2$ \\ \hline
        {\it formally} critical points & $\{0\}$ & $\{-\pi\}$ & $\{\pm \pi/2\}$
	\end{tabular}
\end{center}
and 
	\begin{equation*}
	\lim_{k\to \pm 0} x_{0,l}(k) = \mp (-1)^l\frac{1}{\sqrt{10}}, \;\;
        \lim_{k\to \pm\pi}x_{1,l}(k) = \mp (-1)^l\frac{1}{\sqrt{10}}
	\end{equation*}
	\begin{equation} \label{critical2} 
        \mathrm{and}\;\; x_{2,l}\left(\pm\frac{\pi}{2}\right) = \pm (-1)^l\frac{2}{7}.     
        \end{equation}
From Eq.~(\ref{critical2}), for $j\in \{0,1\}$ cases, it is useful to take the domain of the wave number $k$ 
by $k\in [0,2\pi)$ and $[-\pi, \pi)$, respectively. 
On the other hand, in case $j=2$, it is useful to decompose the domain $[-\pi,\pi)$ into $[-\pi/2,\pi/2)\cup [\pi/2,3\pi/2)$. 
Now we obtain the following proposition which implies Theorem~\ref{TM:th:convergence}. 
\begin{proposition}
	\begin{equation*}
	\lim_{t\to\infty} \mathbb{P}(X_t/t\leq x)=\int_{-\infty}^{x} \left\{\Delta\delta_0(y) + \sum_{m,l\in \{0,1\}}\rho_{m,l}(y)\right\} dy, 
	\end{equation*}
where the functions $\rho_{0,l}$ and $\rho_{1,l}$ have the following properties: 
\begin{enumerate}
	\item The functions $\rho_{0,l}$ and $\rho_{1,l}$ are continuous functions which have the supports $(-1/\sqrt{10},1/\sqrt{10})$ and $(-2/7,2/7)$, respectively. 
        \item The sets $C_{m}^{(l)}\equiv \{(x,\rho_{m,l}(x)): x\in\mathbb{R}\}$ $({m,l\in\{0,1\}})$ are described by the wave number $k$ as follows: 
	\begin{align}
        C_{0}^{(l)} &=
                \left\{ \left(x_{0,l}(k), \frac{w_{0,l}(k)+w_{1,l}(k-\pi)}{2\pi|d x_{0,0}(k)/d k|}\right): k\in [0,2\pi) \right\}, \label{C0}\\
        C_{1}^{(l)} &=
                \left\{ \left(x_{2,l}(k), \frac{w_{2,l}(k)+ w_{2,l}(\pi-k)}{2\pi|d x_{2,0}(k)/d k|}\right): k\in [-\pi/2,\pi/2) \right\}, \label{C1}
        \end{align}
where  
	\[ w_{j,l}(k)=|\langle v_{j,l}(k),\hat{\Psi}_0(k) \rangle|^2.  \]
\end{enumerate}
\end{proposition}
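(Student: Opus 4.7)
The plan is to take the $r$-th moment formula Eq.~(\ref{eq:r-th_moment}) as the starting point, apply to its continuous part the change-of-variables procedure from Eqs.~(\ref{chikan})--(\ref{direct}), and identify the resulting densities with $\rho_{0,l}$ and $\rho_{1,l}$. Since moments of a compactly supported measure determine it, producing the right moments in turn yields the weak convergence statement. The delta contribution $\Delta\delta_0$ in Eq.~(\ref{eq:r-th_moment}) already matches the corresponding part of the claimed limit by Eq.~(\ref{Delta}) and Proposition~\ref{eigenspaceC_4'}, so the core task is to rewrite
\[
\sum_{j=0}^2\sum_{l=0}^1\int_{-\pi}^{\pi} x_{j,l}(k)^r\,w_{j,l}(k)\,\frac{dk}{2\pi}
\]
in the form $\int x^r\sum_{m,l}\rho_{m,l}(x)\,dx$.

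For each branch $(j,l)$ I would partition the period of $k$ into maximal subintervals on which $x_{j,l}$ is strictly monotonic, using the table of formally critical points and the endpoint values in Eq.~(\ref{critical2}). For $j\in\{0,1\}$ the function is monotonic on one whole period; for $j=2$ the critical points $\pm\pi/2$ force a split into two halves. On each monotonic piece the parametric formula Eq.~(\ref{parameter}) turns the $dk$-integral into a $dx$-integral against a continuous density, with support exactly $(-1/\sqrt{10},1/\sqrt{10})$ when $j\in\{0,1\}$ and $(-2/7,2/7)$ when $j=2$. This immediately gives property (1).

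The heart of the proof is to assemble these six branch-densities into the two targets in Eqs.~(\ref{C0})--(\ref{C1}). The key symmetry comes from $\arccos(A\cos(k+\pi))=\pi-\arccos(A\cos k)$, which via Eq.~(\ref{randomwalk}) produces a permutation $\lambda_j(k+\pi)=-\lambda_{\sigma(j)}(k)$ of the cubic roots (one checks $\sigma(0)=1$, $\sigma(1)=0$, $\sigma(2)=2$). Differentiating and combining with Eq.~(\ref{eq:x(k)}) produces a matching of the $j=0$ trajectory on $[0,2\pi)$ with the $j=1$ trajectory under the shift $k'=k-\pi$, so that the two $w$-weights add under the single Jacobian $|dx_{0,0}/dk|=|dx_{0,l}/dk|$ to yield the numerator in Eq.~(\ref{C0}). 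The reflection $k\mapsto\pi-k$, preserving $\cos^2 k$ and hence $|\lambda_2(k)|$, similarly folds the two monotonic halves of the $j=2$ branch into the single parametrization of Eq.~(\ref{C1}).

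Continuity of $\rho_{m,l}$ on the open supports then follows because $x_{j,l}$, $w_{j,l}$, and $1/|dx_{j,l}/dk|$ are smooth away from the critical points, at which the supports terminate. Moment convergence for all $r\geq 0$ together with compact support upgrades to weak convergence and produces the claimed distribution function. I expect the third paragraph to be the main obstacle: tracking the $l$-parity, the signs, and the correct identification of branches under the $k\mapsto k+\pi$ and $k\mapsto\pi-k$ symmetries is delicate, and even a minor sign error or a mis-indexed branch would alter the combined numerators in Eqs.~(\ref{C0})--(\ref{C1}).
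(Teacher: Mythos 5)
Your proposal follows essentially the same route as the paper's proof: the paper likewise starts from the moment formula Eq.~(\ref{eq:r-th_moment}), invokes the parametric change of variables Eq.~(\ref{parameter}) on the monotonic pieces cut out by the critical points, and folds the six branches into the four curves $C_m^{(l)}$ using exactly the symmetries you identify --- the $\pi$-shift interchanging the $j=0$ and $j=1$ roots (stated in the paper as $\xi_0(k+\pi)=-\xi_1(k)$, giving $x_{0,l}(k+\pi)=x_{1,l}(k)$ and $\tau_{0,l}(k+\pi)=\tau_{1,l}(k)$) and the reflection $x_{2,l}(\pi/2+k)=x_{2,l}(\pi/2-k)$, with support and continuity read off from Eq.~(\ref{critical2}) and the continuity of the parametrization. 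The $l$-parity/sign bookkeeping you flag as the delicate step is indeed the only nontrivial content, and it is precisely what the paper's displayed identities are there to settle.
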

\begin{proof}
Noting that $\xi_0(k+\pi)=-\xi_1(k)$ and $\xi_2(\pi/2+k)=-\xi(\pi/2-k)$, we have 
\begin{align}
	x_{0,l}(k+\pi) &= x_{1,l}(k), \\
        x_{2,l}(\pi/2+k) &= x_{2,l}(\pi/2-k) \label{x}, \\ 
        \tau_{0,l}(k+\pi) &= \tau_{1,l}(k)=|d x_{0,0}(k)/d k|, \label{tau1}\\
        \tau_{2,l}(\pi/2+k) &= \tau_{2,l}(\pi/2-k)=|d x_{2,0}(k)/d k|. \label{tau2}
\end{align}
Here we put $\tau_{j,l}(k)=|d x_{j,l}(k)/d k|$. 
Equations~(\ref{x})-(\ref{tau2}) imply
\begin{multline}\label{w1}
	\left\{ \left(x_{1,l}(k), \frac{w_{1,l}(k)}{\tau_{1,l}(k)}\right): -\pi \leq k<\pi \right\} \\
        	= \left\{ \left(x_{0,l}(k), \frac{w_{1,l}(k-\pi)}{|d x_{0,0}(k)/d k|}\right): 0 \leq k< 2\pi \right\}, 
\end{multline}
\begin{multline}\label{w2}
        \left\{ \left(x_{2,l}(k), \frac{w_{2,l}(k)}{\tau_{2,l}(k)}\right): \pi/2 \leq k<3\pi/2 \right\} \\
                = \left\{ \left(x_{2,l}(k), \frac{w_{2,l}(\pi-k)}{|d x_{2,0}(k)/d k|}\right) : -\pi/2 \leq k< \pi/2 \right\}.
\end{multline}
Combining Eqs.~(\ref{w1}) and (\ref{w2}) with Eq.~(\ref{parameter}) leads to the desired conclusion of the second claim. 
By the way, we notice that Eq.~(\ref{critical2}) gives the support of $\rho_{j,l}(x)$. 
Moreover the continuity of $\rho_{j,l}(x)$ in the first claim is immediately obtained, 
since the first and second coordinates of $C_j^{(l)}$  $(j,l\in\{0,1\})$ are continuous with respect to parameter $k$ for each domain. 
\end{proof}
The terms $1/\tau_{j,l}(k)$ are independent of the initial state and give the support of the density function. 
On the other hand, the terms $w_{j,l}(k)$ depend on the initial condition.
Now let us put our focus on a limit density function whose behavior is expected by just the functions $1/\tau_{j,l}(k)$: 
there exists a random initial condition so that the function $w_{j,l}(k)$ averagely becomes constant. 
%
\begin{corollary}
We uniformly choose the initial state $\Psi_0$ from $\{|0\rangle \otimes |j\rangle: j\in\{0,1,\dots,9\}\}$; that is, 
	\[ Prob(``\Psi_0=|0\rangle\otimes |j\rangle")=1/10 \;\;\;\;(j\in\{0,1,\dots,9\}).\]  
Let the initial state be chosen randomly as the above. Then the density function is expressed by 
	\begin{equation*}
        \frac{2}{5}\delta_0(x)+\frac{3}{5}\left\{\nu_0(x)+\nu_1(x)\right\},
        \end{equation*}
where continuous functions $\nu_0(x)$ and $\nu_1(x)$ have the finite supports $(-1/\sqrt{10},1/\sqrt{10})$ and $(-2/7,2/7)$ 
and take infinity at the boundaries $|x|=1/\sqrt{10}$ and $|x|=2/7$, respectively. 
Moreover we have the following parametric expression. 
	\begin{align} 
        \{(x,\nu_0(x)):x\in \mathbb{R} \} = \left\{ \left(x_{0,0}(k), \frac{1}{{3}\pi |d x_{0,0}(k)/d k|}\right): 0 \leq k< 2\pi \right\}, \label{jJacobian1} \\
        \{(x,\nu_1(x)):x\in \mathbb{R} \} = \left\{ \left(x_{2,0}(k), \frac{1}{{3}\pi |d x_{2,0}(k)/d k|}\right): -\pi/2 \leq k<\pi/2 \right\}. \label{jJacobian2}
        \end{align}
\end{corollary}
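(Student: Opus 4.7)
The plan is to average the preceding Proposition's formula over the uniform initial distribution.  Since $\Psi_0 = \ket{0}\otimes\ket{j}$ is concentrated at a single unit of $C_4'$, its Fourier transform is the constant function $\hat{\Psi}_0(k) \equiv \ket{j}$ for all $k$.  The fibre space $\ell^2(A(\mathcal{D}))$ has dimension $10$ (four $C_4$-edges plus one bridge, each oriented two ways), and the collection $\{v_{j,l}(k)\}_{j,l}\cup\{v_j^{(c)}(k)\}_{j=0}^{3}$ is an orthonormal basis of it.  Hence for any fixed unit eigenvector $v(k)$,
\begin{equation*}
  \mathbb{E}\Bigl[\bigl|\braket{v(k)|\hat{\Psi}_0(k)}\bigr|^{2}\Bigr]
  = \frac{1}{10}\sum_{j=0}^{9}\bigl|\braket{v(k)|j}\bigr|^{2}
  = \frac{1}{10}.
\end{equation*}
Substituting into the formula for $\Delta$ immediately gives $\mathbb{E}[\Delta]=4\cdot(1/10)=2/5$, the weight of $\delta_{0}$ claimed in the corollary.

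For the absolutely continuous part, I would plug $\mathbb{E}[w_{j,l}(k)]=1/10$ into the parametric expressions (\ref{C0}) and (\ref{C1}).  Each of the numerators $w_{0,l}(k)+w_{1,l}(k-\pi)$ and $w_{2,l}(k)+w_{2,l}(\pi-k)$ averages to $1/5$, so
\begin{equation*}
  \mathbb{E}[\rho_{0,l}(x)]\leftrightarrow \left(x_{0,l}(k),\,\frac{1}{10\pi\,|dx_{0,0}(k)/dk|}\right),\quad
  \mathbb{E}[\rho_{1,l}(x)]\leftrightarrow \left(x_{2,l}(k),\,\frac{1}{10\pi\,|dx_{2,0}(k)/dk|}\right).
\end{equation*}
To collapse the helicity index $l\in\{0,1\}$ I would combine the sign flip $x_{j,1}(k)=-x_{j,0}(k)$, which is immediate from the $(-1)^l$ prefactor in (\ref{eq:x(k)}), with the Jacobian-preserving involutions $k\mapsto 2\pi-k$ on $[0,2\pi)$ and $k\mapsto -k$ on $[-\pi/2,\pi/2)$, under which $x_{0,0}$ and $x_{2,0}$ are odd (since $\sin k$ is odd while $\cos^2 k$, $\sin\xi_j(k)$ and $\cos^2\xi_j(k)$ depend only on $\cos k$).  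These involutions put the preimages of $x$ under $x_{j,1}$ in bijection with the preimages of $-x$ under $x_{j,0}$ while preserving $|dx/dk|$, so $\mathbb{E}[\rho_{m,0}(x)]=\mathbb{E}[\rho_{m,1}(x)]$.  Summing the two helicities then yields
\begin{equation*}
  \mathbb{E}[\rho_{0,0}+\rho_{0,1}](x) \;=\; \frac{2}{10\pi}\sum_{k:\,x_{0,0}(k)=x}\frac{1}{|dx_{0,0}/dk|} \;=\; \frac{3}{5}\,\nu_{0}(x),
\end{equation*}
and analogously $\mathbb{E}[\rho_{1,0}+\rho_{1,1}]=\tfrac{3}{5}\nu_{1}$, recovering the parametric forms (\ref{jJacobian1}) and (\ref{jJacobian2}).

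The divergence of $\nu_{0}$ and $\nu_{1}$ at the endpoints $\pm 1/\sqrt{10}$ and $\pm 2/7$ follows from (\ref{eq:dx(k)/dk}) together with the table of formally critical points above (\ref{critical2}) and (\ref{critical2}) itself: the endpoints are the images of $k\in\{0,-\pi\}$ under $x_{0,0}$ and of $k=\pm\pi/2$ under $x_{2,0}$, and at these critical points the vanishing factor $\sin k$ (for the first band) or $\sqrt{1-(28/27)\cos^{2}\xi_{j}(k)}$ (for the third band) kills $dx/dk$, forcing $1/|dx/dk|\to\infty$.  The principal obstacle in this scheme is the symmetry bookkeeping in the middle paragraph: one must verify carefully that the involutions $k\mapsto 2\pi-k$ and $k\mapsto -k$ send preimages of $x$ bijectively to preimages of $-x$ with equal Jacobian magnitude, because any sign or multiplicity mismatch would spoil the $3/5$ prefactor that balances the total continuous mass $6/10$ distributed among the six branches $(j,l)\in\{0,1,2\}\times\{0,1\}$.
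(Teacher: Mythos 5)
Your proposal is correct and follows essentially the same route as the paper: the entire content of the paper's proof is the computation $\mathbb{E}\left[\left|\langle v_{j,l}(k),\hat{\Psi}_0(k)\rangle\right|^2\right]=\tfrac{1}{10}\mathrm{Tr}\left(|v_{j,l}(k)\rangle\langle v_{j,l}(k)|\right)=\tfrac{1}{10}$ (your orthonormal-basis sum is the same identity written out), followed by substitution into Eqs.~(\ref{C0}) and (\ref{C1}). Your extra bookkeeping for collapsing the helicity index $l$ and for the endpoint divergences is sound and merely makes explicit what the paper leaves implicit in the phrase ``inserting them completes the proof.''
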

\begin{proof}
We should notice the expectation with respect to the initial state provides 
	\begin{align*} 
        w_{j,l}(k) &= \mathbb{E}[ |\langle v_{j,l}(k),\hat{\Psi}_0(k)\rangle|^2 ]=\mathbb{E}[ |\langle v_{j,l}(k),\Psi_0\rangle|^2 ] \\
         &= \mathbb{E}[\mathrm{Tr}(|\Psi_0\rangle \langle \Psi_0|\cdot |v_{j,l}(k)\rangle \langle v_{j,l}(k)|)]  
          =\frac{1}{10} \mathrm{Tr}(|v_{j,l}(k)\rangle \langle v_{j,l}(k)|)\\
         &=\frac{1}{10}. 
        \end{align*}
Inserting them into Eqs.~(\ref{C0}) and (\ref{C1}) completes the proof. 
\end{proof}

Figure~\ref{fig:prob} shows probability distributions on a rescaled space by time $t$ when the walker 
starts from one of the points in $x=0$ at time $t=0$. 
In case (a) since $\ket{\Psi_0}=(\oplus_{m=0}^3\Gamma_m')^{\bot}$, then $\Delta=0$, 
on the other hand, in case (b), we have from Eq.~(\ref{Delta}), $\Delta=\sum_{m=0}^3|\langle \eta_{m;0}, \Psi_0\rangle|^2=1/2$. 
Moreover the function $f(x)$ is a linear combination of two continuous functions which have different finite supports each other. 
The shapes of the above continuous functions $\nu_0$ and $\nu_1$ are depicted in Fig.~\ref{Fig:densityfunctions}. 
\begin{center}
 \begin{figure}[h]
  \begin{minipage}{60mm}
   \begin{center}
    \includegraphics[scale=0.5]{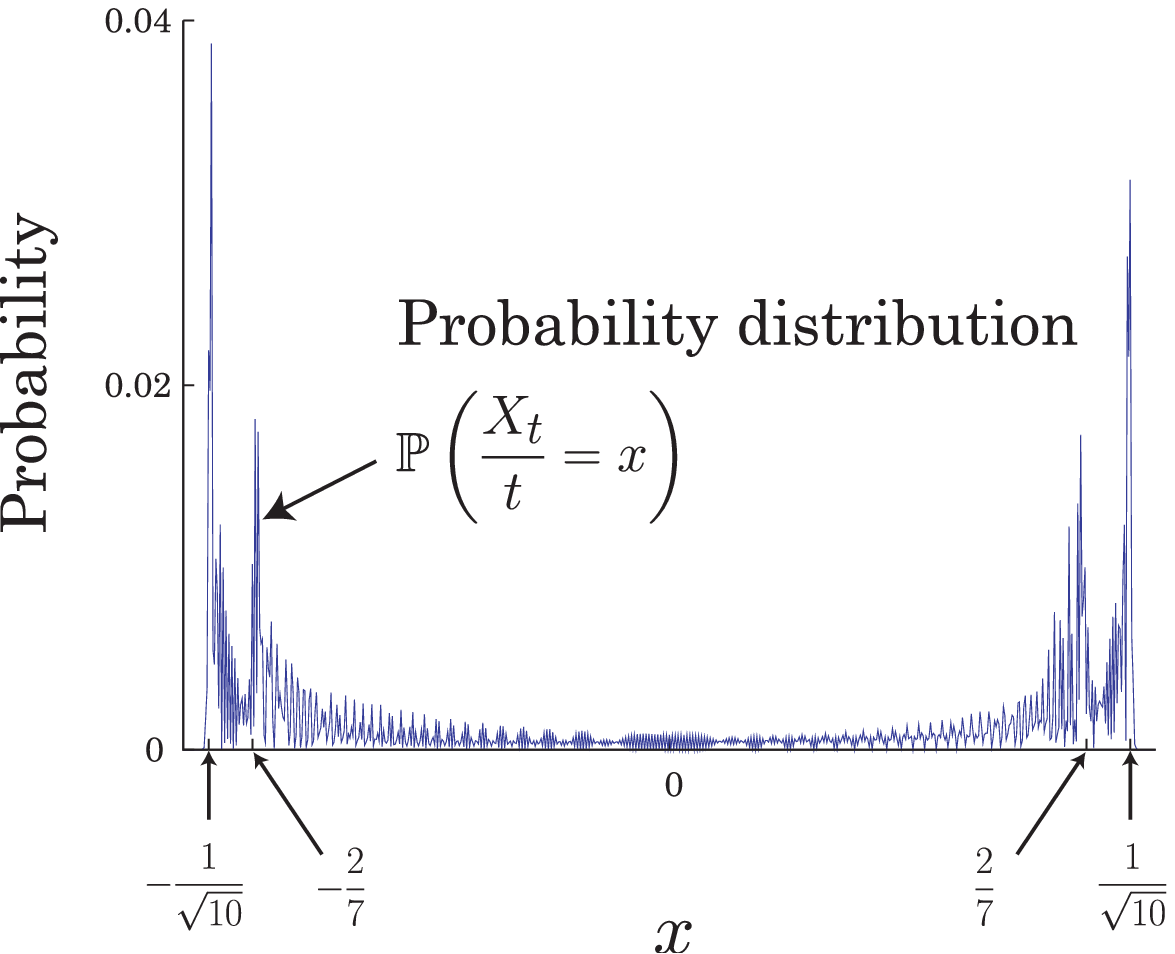}\\[3mm]
    {\scriptsize (a) $\ket{\Psi_0}=\ket{0}\otimes\frac{1}{\sqrt{3}}\left(\ket{7}+\ket{8}+\ket{9}\right)$} 
   \end{center}
  \end{minipage}\hspace{10mm}
  \begin{minipage}{60mm}
   \begin{center}
    \includegraphics[scale=0.5]{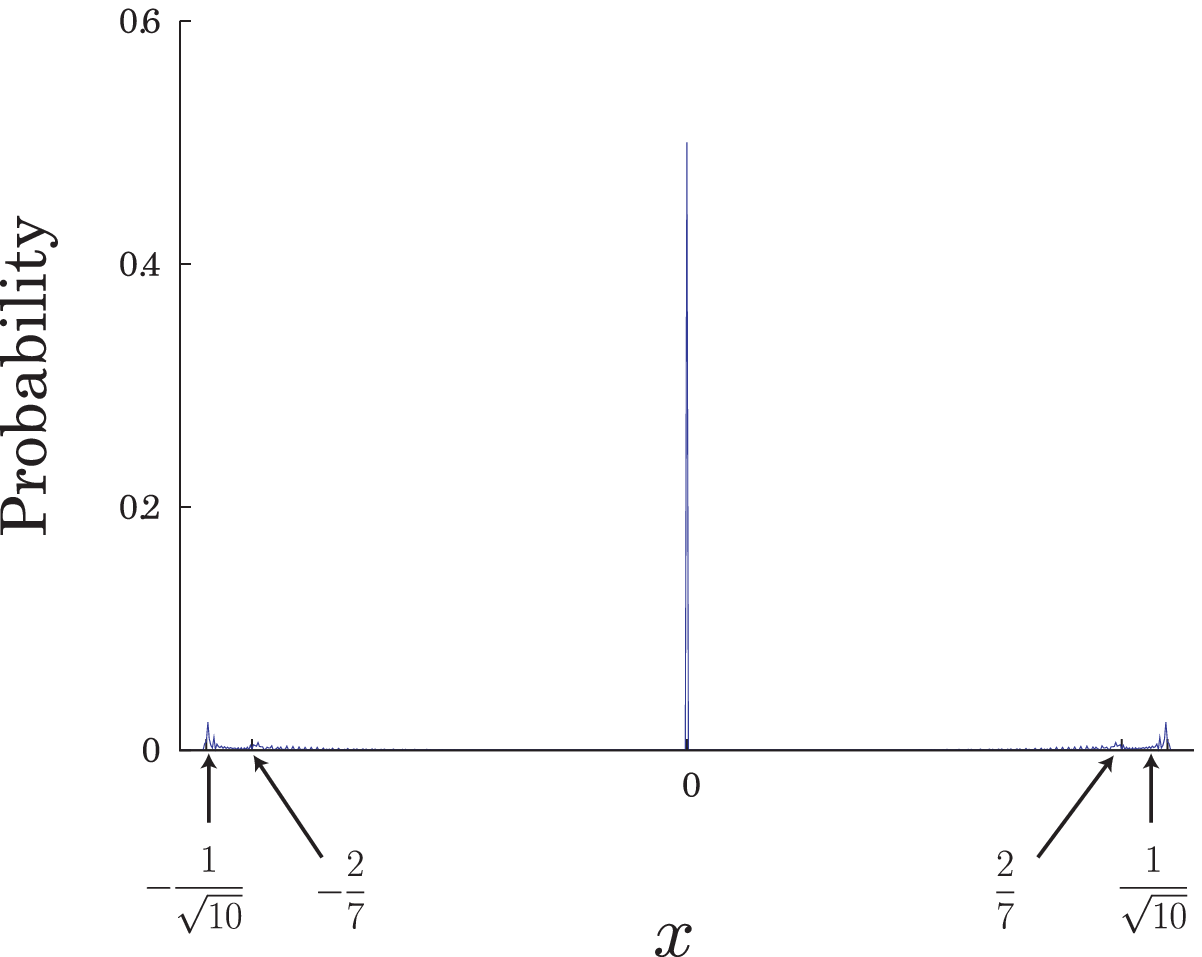}\\[3mm]
    {\scriptsize (b) $\ket{\Psi_0}=\ket{0}\otimes\frac{1}{\sqrt{2}}\left(\ket{3}+i\ket{4}\right)$}
   \end{center}
  \end{minipage}
  \vspace{3mm}
  \caption{Probability distributions $\mathbb{P}(X_t/t=x)$ at time $t=1000$}
  \label{fig:prob}
 \end{figure}
\end{center}

 \begin{figure}[h]
   \begin{center}
    \includegraphics[width=50mm]{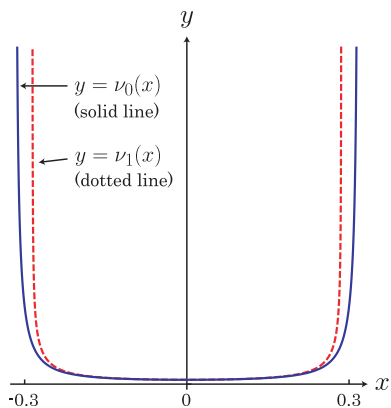}
   \end{center}
   \caption{The shapes of $\nu_0(x)$ and $\nu_1(x)$ by the exact computation of Eqs.~(\ref{jJacobian1}) and (\ref{jJacobian2}): 
   solid and dotted lines depict $\nu_0(x)$ and $\nu_1(x)$, respectively. }
   \label{Fig:densityfunctions}
 \end{figure}

\section{Discussion}
For comparing with our result, we provide well known result on 
the Szegedy walk based on asymmetric random walk on $\mathbb{Z}$ that a particle jumps to the left or the right with probability $q$ or $p$ $(pq\neq 0)$, 
respectively, at each time step. We denote the transition operator of the random walk $P_0$. 
Let {$Y_t^{RW;\mathbb{Z}}$} be the random walk at time $t$. 
The asymptotic of the variance of the random walk is $\lim_{t\to\infty}E[(Y_t^{RW;\mathbb{Z}}-E[Y_t^{RW;\mathbb{Z}}])^2/t]=\sigma^2$ with $\sigma=2\sqrt{pq}$. 
On the other hand, the spectrum of the time evolution operator of the Szegedy walk lies on a part of the unit circle $\{e^{i\theta}: |\cos \theta| \leq \sigma\}$ continuously. 
Let $Y_t^{QW;\mathbb{Z}}$ be a position of this QW at time $t$. The initial state is chosen uniformly from $\{\delta_{(0,1)},\delta_{(0,-1)}\}$. 
The density function for the weak convergence of $Y_t^{QW;\mathbb{Z}}/t$ is described by $f_K(x; \sigma)$ \cite{Konno2002,Konno2005}. 
For $p\neq q$ case, a parametric expression for the density function becomes 
	\begin{equation}\label{paraKonno}
        \left\{  \left( \pm \sqrt{\frac{\sigma^2-\lambda^2}{1-\lambda^2}}, \;
        \left|\frac{(1-\lambda^2)^{3/2}}{\pi (1-\sigma^2)\lambda}\right| \right): \lambda\in \mathrm{spec} (P_0) \right\}. 
        \end{equation}	
On the other hand, $p=q$ case, this walk corresponds to the potential free Schr{\"o}dinger equation, which implies  
	\begin{equation*}
        \left\{  \left( \lambda, \;\frac{1}{2}(\delta_{-1}(\lambda)+\delta_1(\lambda)) \right): \lambda\in \mathrm{spec} (P_0) \right\}. 
        \end{equation*}
Now let us consider our model. 
We can easily check that the three eigenvalues of the underlying twisted random walk are bounded by $2/3\leq \lambda_0(k)\leq 1$, $-1\leq \lambda_1(k) \leq -2/3$ and $-1/3\leq \lambda_2(k) \leq -1/3$. 
(see Eq.~(\ref{randomwalk}). ) For the transition matrix $P$ of the underlying random walk, it holds that 
	\[ \mathrm{spec}(P)=[-1,-2/3] \cup [-1/3,1/3] \cup [2/3,1] \cup\{ 0 \} . \]
Using the above, the parametric expression of the functions $\nu_0(x)$ and $\nu_1(x)$ in Eqs.~(\ref{jJacobian1}) and (\ref{jJacobian2}) is 
re-expressed by  
	\begin{equation}\label{para_ourmodel}
        \left\{  \left( \pm \sqrt{\frac{(\gamma^2-\lambda^2) \eta(\lambda)}{1-\lambda^2}},\; 
        \frac{7\sqrt{21}}{|G(\lambda)|\sqrt{1-\lambda^2}}\right): \lambda \in \mathrm{spec}(P) \right\}, 
        \end{equation}
where $\gamma=\sqrt{28/27}$ and $G(x)=F(x/\gamma)$, which is a rational function (see Eq.~(\ref{F}) for an explicit expression of $F(x)$).  
Here 
	\[ \eta(\lambda)=\frac{A^2-\cos^2\left\{ 3\arccos \left(\lambda/\gamma\right) \right\}}
		{9\sin^2\left\{ 3\arccos \left(\lambda/\gamma\right) \right\}}. \]
We notice that the value $1/\sqrt{10}$, which describes the support of the density function $f(x)$, 
corresponds to the standard deviation $\sigma$ of $Y_t^{RW;\mathbb{Z}}$ 
since this value $1/\sqrt{10}$ is equivalent to the coefficient of the first order of the standard deviation for the underlying random walk. 
From the above observations, it seems that a diffusion property of the underlying random walks is reflected to a spreading strength of the Szegedy walks. 

By the way, the spectrum of our QW is distributed on 
	\[\{e^{i \theta}: |\cos\theta|\leq 1/3\} \cup \{e^{i\theta}:|\cos\theta|\geq 2/3 \} 
        	\cup \{\pm i\} \cup \{\pm 1\}. \]
The first and second sets are the support of the absolutely continuous part, which are related to linear spreading, 
and the last two parts are its point spectrums, which are related to a homological structure and localization. 
The supports of the above two absolutely continuous parts are mutually disjoint. 
We observed that the split support of the spectrum provides a superposition of two kinds of continuous functions which have different finite supports each other. 
Another interesting point is that all the point spectrums are embedded in the support of the absolutely continuous parts. 
We have not found explicitly an effect of this embeddedness on the behavior of quantum walks. 
This is one of the interesting future's problem. 
%


\par
\
\par\noindent
\noindent
{\bf Acknowledgments.}
\par
TM is grateful to the Japan Society for the Promotion of Science for the support, and to the Math. Dept. UC Berkeley for hospitality.
ES thanks to the financial support of the Grant-in-Aid for Young Scientists (B) of Japan Society for the
Promotion of Science (Grant No.25800088). 

\par

\begin{small}
\bibliographystyle{jplain}

\end{small}


\end{document}